\tikzset{> = latex}
\newcommand{\F }{\textsc{false}}
\newcommand{\Rab }{R_{a\times b}}
\newcommand{\C}{\mathcal{C}}
\newcommand{\maxsatpb}{\texttt{3-Occ-Max-2SAT}}
\newcommand{\minnum}{\texttt{MinNum}}
\newtheorem{theorem}{Theorem}
\newtheorem{definition}{Definition}
\newtheorem{lemma}{Lemma}
\newtheorem{claim}{Claim}
\newtheorem{corollary}{Corollary}
\newtheorem{problem}{Problem}
\newcommand{\T }{\textsc{true}}
\title{Weak  Coverage of a Rectangular Barrier \thanks{Research supported by NSERC, Canada}}
\author[1]{S. Dobrev}
\author[2]{E. Kranakis}
\author[3] {D. Krizanc}
\author[4] {M. Lafond} 
\author[5]{J. Ma\v nuch} 
\author[6]{L. Narayanan}
\author[6]{J.~Opatrny}
\author[7]{S.~Shende}
\author[8]{L. Stacho} 
\affil[1]{Institute of Mathematics, Slovak Academy of Sciences, Bratislava, Slovakia}
\affil[2]{School of Computer Science, Carleton University, Ottawa, Canada}
\affil[3]{Dept. of Mathematics and Computer Science,
Wesleyan University, Middletown CT, USA}
\affil[4]{Department of Mathematics and Statistics, University of Ottawa, Ottawa, Canada}
\affil[5]{Department of Computer Science, University of British Columbia, Vancouver, Canada}
\affil[6]{Department of Computer Science and Software Engineering, 
Concordia University, Montreal, QC,  Canada} 
\affil[7]{Department of Computer Science, Rutgers University, Camden, NJ, USA}
\affil[8]{Department of Mathematics, Simon Fraser University,
Burnaby BC, Canada}
\date{}                                           
\begin{document}
\maketitle
\begin{abstract}
Assume $n$ wireless mobile sensors are initially dispersed in an ad hoc manner in a rectangular region. They are required to move to final locations so that they can detect any intruder crossing the region in a direction parallel to the sides of the rectangle, and thus provide {\em weak barrier coverage} of the region. We study three optimization problems related to  the movement of sensors to achieve weak barrier coverage: minimizing the {\em number} of sensors moved (MinNum), minimizing the {\em average} distance moved by the  sensors (MinSum), and minimizing the {\em maximum} distance moved by  the sensors (MinMax). We give an $O(n^{3/2})$ time  algorithm for the MinNum problem  for sensors of diameter $1$ that are initially placed at integer positions; in contrast we show that the problem is NP-hard even for sensors of diameter 2 that are initially placed at integer positions. We show that the MinSum problem is solvable in $O(n \log n)$ time for homogeneous range sensors in arbitrary initial positions for the Manhattan metric, while it is NP-hard for heterogeneous sensor ranges for both Euclidean and Manhattan metrics. Finally, we prove that  even very restricted homogeneous versions of the MinMax problem are NP-hard. 

\end{abstract}

\section{Introduction}
Intruder detection is an important application of wireless sensor networks. Each sensor  monitors a circular area centered at its location, and can immediately alert a monitoring station if it detects the presence of an intruder. Collectively the sensors can be deployed  to monitor the entire region, providing so-called {\em area coverage}. However, for many applications, it is sufficient, and much more cost-effective, to simply monitor the boundary of the region, and provide so-called {\em barrier coverage}. 

Barrier coverage was introduced in \cite{kumar2005}, and has been extensively studied since then \cite{bol,ban,mohs,li,yan,adhocnow2009,adhocnow2010}. The problem was posed as the deployment of sensors in a narrow belt-like rectangular region in such a way that any intruder crossing the belt would be detected. A sensor network is said to provide {\em strong barrier coverage} if an
intruder is detected regardless of the path it follows across the given barrier (see Figure \ref{fig:example} (a)). 
In contrast, a sensor network provides  {\em weak coverage} if an intruder is detected when it follows  a straight-line path across the width of the barrier. If the location of the sensors is not known to a trespasser, weak coverage is often sufficient, and is more cost-effective. 

In this paper, we consider a more general notion of weak coverage than previously considered. Given a rectangular barrier, we aim to detect intruders who cross the region in a straight-line path parallel to {\em either of the axes} of the rectangle (see Figure \ref{fig:example} (b)).

\begin{figure}[!htb]
\begin{center}
\includegraphics[width=10cm]{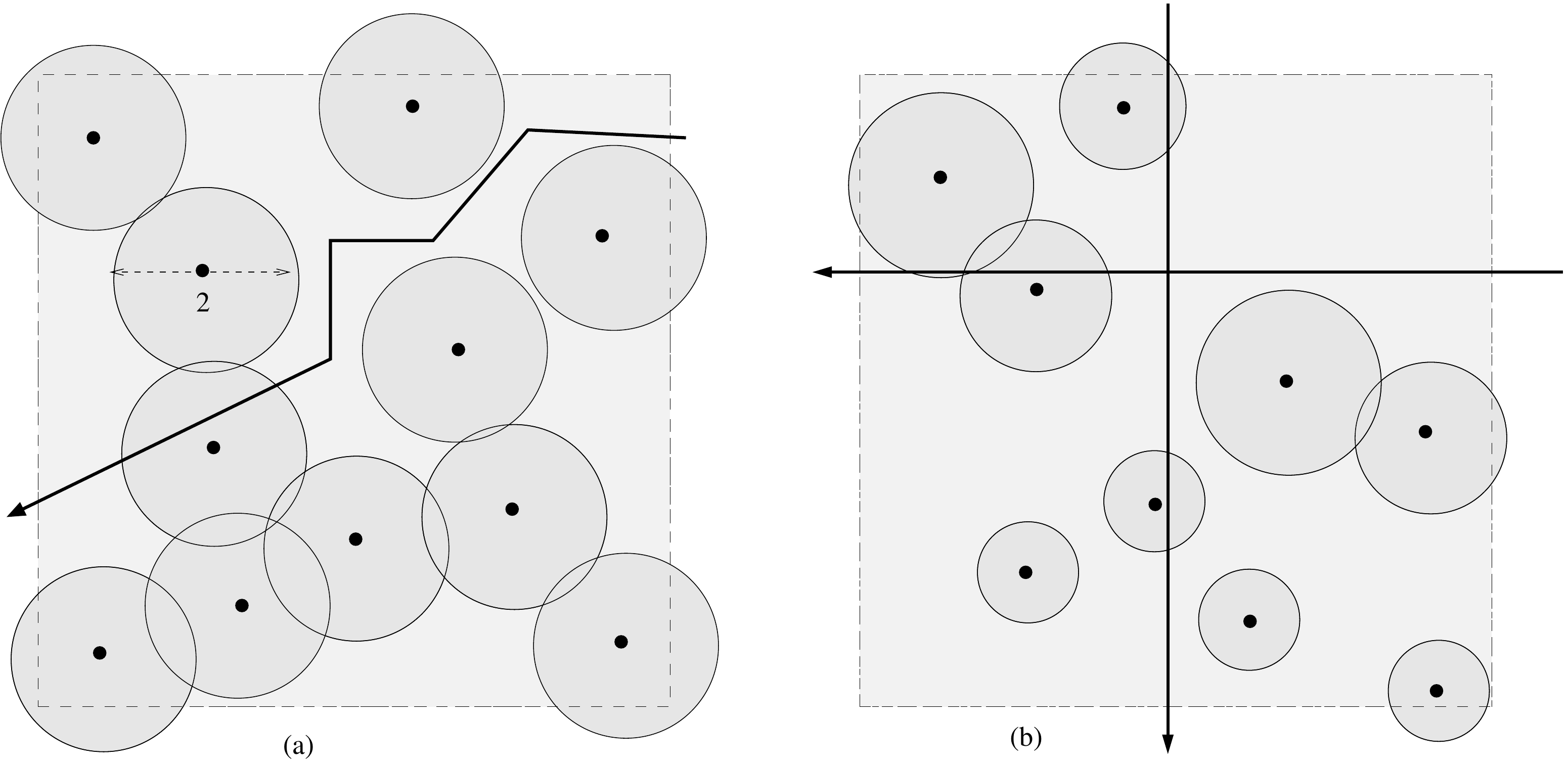}
\end{center}
\caption{(a) Strong  coverage of the shaded square area by a homogeneous network,
  (b) Weak coverage of the shaded square area by a non-homogeneous network for paths perpendicular to the axes.  
}
\label{fig:example}
\end{figure}

A sensor network can be deployed for the given barrier in several different ways. In {\em deterministic deployment}, sensors are placed in pre-defined locations that ensure intruder detection. However, when the deployment area is very large, or the terrain in the area is difficult or dangerous, a deterministic deployment might be costly or even impossible. In those instances a {\em random or ad hoc deployment} of sensors can be done
\cite{bol}. However, this type of deployment might leave some gaps in the coverage of the area. 
Two approaches have been considered in order to deal with this problem. One is a {\em multi-round random deployment} in which the random dispersal is repeated until the coverage of the area is assured with very high probability \cite{yan}.
The other approach is to use {\em mobile (or relocatable) sensors} \cite{tcs2009}. After the initial dispersal,
some or all sensors are instructed to {\em relocate to new locations} so that the
desired barrier  coverage  is achieved. Clearly,  the relocation of the sensors should be performed  in the most efficient way possible. In particular, we may want to minimize the time or energy needed to perform the relocation, or the number of sensors to be relocated. 




\subsection{Notation and problem definition}
%

We assume that $n$ sensors are initially located in an  axis-parallel rectangular area $R$ of size $a\times b$ in the Cartesian plane.  The  $n$
{\em sensors} $S_1,S_2,\ldots,S_n$ have  sensing ranges
$r_1,r_2,\ldots,r_n$ respectively.  The {\em diameter} of a sensor is equal to twice its range.
We assume that $\sum_{k=1}^n 2r_k \geq \max\{a,b\}$; this ensures that placed in appropriate locations, the sensors can achieve weak barrier coverage.  A sensor network is called {\em homogeneous}  if the sensing ranges of all sensors in the network is the same. Otherwise the network is called {\em heterogeneous}.

A {\em configuration} is a tuple $(R, p_1, p_2, \ldots, p_n)$ where  $R$ is the rectangle to be weakly barrier-covered and $\{p_1,p_2,\ldots,p_n\}$ are the positions of the sensors. We say a configuration is   a {\em blocking configuration} if any straight line, perpendicular to either $x$ or $y$ axes,
crossing  the rectangle $R$, crosses the sensing area of at least one sensor.  In other words, a blocking configuration achieves weak coverage of the rectangle $R$  (abbreviated WCR). A non-blocking configuration is said to have {\em gaps} in  the coverage. A given
configuration $(R,p_1,p_2,\ldots,p_n)$ is said to be an {\em integer configuration} if 
 $p_i=[k_i,j_i]$ for some integers $k_i,j_i$, for every $i$ in the range $1 \leq i \leq n$. We consider both the Euclidean and Manhattan metrics for distance and denote it by $d(x.y)$.

Given an initial configuration $(R, p_1, p_2, \ldots, p_n)$, we study three problems related to finding a blocking configuration: 

\begin{itemize}
\item {\bf  MinSum-WCR problem}: Find a blocking configuration $\{R,p_1',p_2',\ldots,p_n'\}$ that minimizes 
$\sum^{n}_{k=1} d(p_k,p_k') $, i.e., the sum of all movements.

\item  {\bf  MinMax-WCR problem:} Find a blocking configuration $\{R,p_1',p_2',\ldots,p_n'\}$ that minimizes 
$\max\{d(p_1,p_1'),d(p_2,p_2'), \ldots, d(p_n,p_n')\} $,
i.e., the size of the maximal move among the sensors.

\item  {\bf MinNum-WCR problem:} Find a blocking configuration $\{R,p_1',p_2',\ldots,p_n'\}$ which
minimizes the number of indices for which $ d(p_k,p_k') \neq 0$,
 $1\leq k \leq n$, i.e., minimizes the number of relocated sensors.

\end{itemize}


\subsection{Our Results}
For the MinNum-WCR  problem, we show that the problem
is NP-complete, even when the  initial configuration is an integer configuration, and even when all sensors have range  $1$.
However  when the initial  configuration is an integer configuration, all sensors have range $0.5$, 
we give an $O(n^{3/2})$ algorithm for solving the MinNum-WCR problem. 

When all sensors have the same range, regardless of their initial positions, we give an $O(n\log n)$
algorithm to solve the MinSum-WCR  problem using the Manhattan metric. However, the  problem is  shown to be NP-complete for both Manhattan and Euclidean metrics when the sensors can have different ranges. 

Finally,  we show that the decision version of the MinMax-WCR problem
is NP-complete even for a very restricted case. More specifically,
given an integer configuration, with all sensor ranges equal to $0.5$, 
the problem of deciding whether there is a blocking configuration with maximal move at
most $1$ (using either the Manhattan or Euclidean metric) is NP-complete. This is in sharp contrast to the one-dimensional barrier coverage case where the MinMax problem can be solved in polynomial time for arbitrary initial positions, and heterogeneous sensor ranges.

\subsection{Related Work}
Barrier coverage using wireless sensors was introduced as a cost-effective alternative to area coverage in  \cite{kumar2005}.  The authors introduced and studied  the notions of  both {\em strong} and {\em weak} barrier coverage in this paper, and studied coverage of a narrow belt-like region. Since then the problem has been extensively studied, for example, see \cite{bol,ban,mohs,li,yan}. 

The problem of achieving barrier coverage using mobile or relocatable sensors was introduced in \cite{adhocnow2009}. The authors studied a line segment barrier and gave a polynomial time algorithm for the MinMax problem when all sensors have the same range, and are initially placed on the line containing the barrier.  For the same setting, the case of heterogeneous sensors was shown to be also solvable in polynomial time 
in  \cite{swat2012}, and the algorithm of \cite{adhocnow2009} for the homogeneous case was also improved. An $O(n^2)$ algorithm for the MinSum problem with homogeneous sensors is given  in \cite{adhocnow2010}, and an improved $O(n\log n)$ algorithm is presented in \cite{andr2016}. It was proved in \cite{adhocnow2010} that the MinSum problem is NP-hard when sensors have heterogeneous ranges. The MinNum problem is considered in
\cite{WCNC2011}, and shown to be NP-hard for heterogeneous sensors and poly time  for homogeneous sensors. 

In \cite{tcs2015}, the complexity of the MinMax and MinSum problems when sensors are initially placed in the plane and are required to relocate to cover parallel or perpendicular barriers is studied. The authors show that while MinMax and MinSum can be solved using dynamic programming in polynomial time if sensors are required to move to the closest point on the barrier, even the feasibility of covering two perpendicular barriers is NP-hard to determine.

A stochastic optimization algorithm was considered in \cite{hab2015}.  
Distributed algorithms for the barrier coverage problem were studied in 
\cite{mohs2,distr}. Further, \cite{conf/algosensors/KranakisKLS15} provides algorithms for deciding if a set of sensors provides $k$-fault tolerant protection against rectilinear attacks in both one and two dimensions. To the best of our knowledge, the problem of 
weak coverage of a rectangular region (in two directions) has not been studied previously.

\section{MinNum-WCR Problem}

For a line segment barrier, the MinNum problem was shown to be NP-complete if sensors have different ranges, but  if all sensors have the same range, a polynomial-time MinNum algorithm is given in \cite{WCNC2011}. In this section, we study the MinNum-WCR problem. 

\subsection{Hardness result}

We show that  MinNum-WCR is  NP-complete, even when all sensors have sensing range 1 and the initial configuration is an integer configuration. 
We give a reduction from a restricted satisfiability
problem, shown to be NP-complete in \cite{berman1999some},  and defined below:

\begin{tcolorbox}
\noindent{\bf \maxsatpb{} Problem:} \\
\noindent {\bf Input}: An integer $t$, a set of boolean variables $x_1, \ldots, x_n$ and a set of clauses $\C = \{C_1, \ldots, C_m\}$, each consisting of a
conjunction of two literals, such that 
each variable appears in \emph{exactly} $3$ clauses, and no variable occurs only
positively in $\C$, nor only negatively in $\C$. \\
\noindent {\bf Question}: Does there exist an assignment of $x_1, \ldots, x_n$ that satisfies at least $t$ clauses of $\C$? 

\end{tcolorbox}

\begin{theorem}
  MinNum-WCR is NP-complete even for  integer configurations in which all
  sensing diameters are equal to $2$. 
\end{theorem}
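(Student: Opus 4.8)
The plan is to first recast weak coverage as a pair of \emph{coupled} one-dimensional covering conditions, and then build a gadget reduction from \maxsatpb{}. A configuration is blocking precisely when the $x$-projections of the sensing disks cover the horizontal extent of $R$ and, simultaneously, the $y$-projections cover its vertical extent; since every disk has diameter $2$, each sensor contributes a closed interval of length $2$ to each of these two covering problems, and the two problems are linked only through the fact that relocating a sensor changes both its $x$-interval and its $y$-interval at once. I would stress this coupling from the outset: a single relocated sensor placed at $(x',y')$ can simultaneously plug one gap in the horizontal barrier and one gap in the vertical barrier, and it is the combinatorics of \emph{which} gaps may be paired that the satisfiability structure will control. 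Note also that with integer initial positions and diameter $1$ the length-$1$ intervals merely tile, which is why that case collapses to bipartite matching; the length-$2$ overlap available at half-integer placements is exactly the extra freedom that makes the diameter-$2$ case hard.

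Membership in NP is routine. Given a move budget $K$, I take as certificate the final positions of the at most $K$ relocated sensors. I first argue that some optimal solution uses relocated positions drawn from a polynomial-size candidate set: each relocated disk can be slid until one of its four interval endpoints is flush against a gap boundary determined by the unmoved sensors, so the certificate has polynomial size. Verification then reduces to sorting the $x$-intervals and the $y$-intervals and checking that each union covers the corresponding side of $R$, which is clearly polynomial.

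For hardness I reduce from \maxsatpb{}. Given variables $x_1,\dots,x_n$, clauses $C_1,\dots,C_m$ and threshold $t$, I construct a rectangle $R$, an integer configuration of diameter-$2$ sensors, and a budget $K$ so that a blocking configuration within budget $K$ exists iff at least $t$ clauses of $\C$ are satisfiable. The skeleton is a fixed \emph{frame} of sensors that already covers almost all of both barriers, leaving only a controlled collection of unit gaps. For each variable $x_i$ I install a variable gadget consisting of a few sensors positioned so that it can be made blocking in exactly two minimum-cost ways, one corresponding to setting $x_i$ to \T{} and one to \F{}: the two choices cost the same number of moves but make a relocated sensor available to reach clause gaps on the ``positive'' or the ``negative'' side, respectively. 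For each clause $C_j$ I place one dedicated gap (say in the horizontal barrier) that can be plugged \emph{for free} by a sensor already relocated inside the gadget of one of its two literals precisely when that literal is set true; if both literals of $C_j$ are false, one extra move is forced to plug the gap. Exploiting the identity $2m=3n$ and the exactly-three-occurrences property to balance the move counts, I calibrate $K$ so that the total number of moves equals a fixed baseline plus the number of unsatisfied clauses; thus the budget is met iff at most $m-t$ clauses are unsatisfied.

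The correctness argument splits into an easy and a hard direction. From an assignment one reads off the gadget choices and checks the move count mechanically. The hard part, which I expect to be the main obstacle, is the converse: showing that an arbitrary blocking configuration meeting the budget can be normalized, \emph{without} increasing the number of moved sensors, into one that respects the gadget boundaries, so that each variable gadget sits in a clean \T/\F{} state and each clause gap is charged consistently. This calls for exchange arguments that slide or swap relocated sensors to gadget-aligned positions and rule out ``cheating'' placements in which one disk straddles two gadgets or a single length-$2$ interval covers pieces of two distinct gaps. Controlling this overlap is precisely where the diameter-$2$ case departs from the tractable diameter-$1$ case, and the inter-gadget spacing in the frame must be chosen carefully to force the normalization through.
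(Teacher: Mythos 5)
You correctly identify the reduction source (\maxsatpb{}) and the key geometric coupling---a single relocated diameter-$2$ sensor can simultaneously plug one gap of the horizontal barrier and one gap of the vertical barrier---but your argument leaves its central step as a declared open problem. Everything hinges on the converse direction: showing that an arbitrary blocking configuration within budget can be normalized into one respecting the gadget semantics. You acknowledge that this needs exchange arguments and ``carefully chosen'' inter-gadget spacing, but you never supply either, and for MinNum this step is genuinely treacherous: the objective counts moved sensors and is completely oblivious to distance, so \emph{any} sensor may be relocated to \emph{any} position at cost $1$. Phrases like ``make a relocated sensor available to reach clause gaps on the positive or the negative side'' therefore carry no force---reachability is never a constraint---and the only structure available to you is (i) which sensors can be removed without creating new gaps, and (ii) global counting of gaps versus moves. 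Your budget calibration (``baseline plus one extra move per unsatisfied clause'') is also not carried out: under the enforced $2{+}1$ split of each variable's three occurrences, the two polarity choices of a variable gadget serve different numbers of clause gaps, so it is not clear the \T{}/\F{} states can be made to cost the same while each plugging exactly the gaps of the clauses that polarity satisfies; and in a penalty-style accounting a cheating solution could spend an extra move in one gadget and save one in another, which is exactly the kind of trade your missing normalization must exclude.

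The paper's construction is engineered so that the normalization step vanishes entirely, replaced by a tight counting argument. Both axes are fully covered except for a common range of length exactly $2t$, leaving an uncovered $2t\times 2t$ corner square; since every sensor covers length exactly $2$ on each axis, any solution using at most $t$ moves must move \emph{exactly} $t$ sensors and place \emph{all} of them inside that corner. Consequently no moved sensor can repair a gap created elsewhere, and feasibility reduces to pure removability from the initial layout: on the $y$-axis the variable gadget orders its sensors as $\alpha_i, s_{i,C_{j_1}}, s_{i,C_k}, s_{i,C_{j_2}}, \beta_i$, where $x_i$ occurs alike in $C_{j_1},C_{j_2}$ and oppositely in $C_k$, so removing $s_{i,C_k}$ together with one of the other two literal sensors creates a gap---this alone encodes assignment consistency---while on the $x$-axis the two sensors of each clause co-located at $2k-1$ permit at most one removal per clause, and the partially overlapping $\alpha_i,\beta_i$ pairs pin down the padding. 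The count of satisfied clauses is then literally the count of moved sensors, with no baseline and no penalties. If you wish to salvage your version, the fix is to abandon the exchange-argument route and build an analogous exact-counting mechanism (uncovered area whose size forces the move budget to be spent entirely on it); as written, the proposal is a plausible plan whose hardest component is absent.
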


\begin{proof}
  Given a \maxsatpb{} instance with variables $x_1, \ldots, x_n$ and clauses
  $\C = \{C_1, \ldots,$ $C_m\}$, we construct a corresponding instance of  the MinNum-WCR problem consisting of a set of sensors $S$ each having radius $r = 1$, and a rectangle $R$ to be covered. 
Note that $m = \frac{3}{2}n$, since there are $3n$ literals in $\C$ and
each clause contains two literals.

$R$ is defined to be a $(6n + 2t)\times (6n + 2t)$ square.
The sensor set $S$ contains one sensor $s_{i, C_j}$ for each literal $x_i$
that appears in a clause $C_j$.  We also need two sensors $\alpha_i$ and $\beta_i$ for each $i \in [n]$. Formally, 
$S = \{s_{i, C_j} : x_i$ occurs in clause $C_j, i \in [n], j \in [m]\} \cup \{\alpha_{1}, \ldots, \alpha_n, \beta_{1}, \ldots, \beta_n\}$.  We first describe how the
sensors of $S$ are laid out on the $y$-axis, then on the $x$-axis.  For a sensor $s \in S$, denote by 
$y(s)$ and $x(s)$ the $y$ and $x$ coordinate
of its center, respectively.
Figure~\ref{fig:minnum_reduction} illustrates the $y$ and $x$ positioning of the sensors.

Each variable $x_i$ has a corresponding gadget $Y_i$ on the vertical axis, where a
gadget is simply a set of sensors positioned in a particular manner.  Each gadget $Y_i$ covers
the vertical range $[6(i - 1) .. 6i]$.
Let
$C_{j_1}, C_{j_2}$ and $C_k$ be the three clauses in which
$x_i$ occurs.  Choose $j_1$ and $j_2$ such that 
$x_i$ occurs in $C_{j_1}$ and $C_{j_2}$ in the same manner
(either positively in both, or negatively in both), 
and so that it occurs in $C_k$ differently.
Let $z = 6(i - 1)$ 
and let $y(\alpha_i) = z + 1,$ $y(s_{i, C_{j_1}}) = z + 2,$ $y(s_{i, C_k}) = z + 3,$  $y(s_{i, C_{j_2}}) = z + 4$ and $y(\beta_i) = z + 5$.
Observe that $Y_1, \ldots, Y_n$ cover the range $[0 .. 6n]$ on the $y$-axis, which leaves the range 
$(6n .. 6n + 2t]$ uncovered.

Also, note that moving $\alpha_i$ or $\beta_i$ creates a gap in $Y_i$.  Moreover, $s_{i, C_k}$ can be moved, or both $s_{i, C_{j_1}}$ and $s_{i, C_{j_2}}$ can be moved.  However, moving 
both $s_{i, C_k}$ and one of $s_{i, C_{j_1}}$ or $s_{i, C_{j_2}}$ creates a gap (see Figure~\ref{fig:minnum_reduction}).

\begin{figure}[h]
 \centering
  \includegraphics[width=0.95\textwidth]{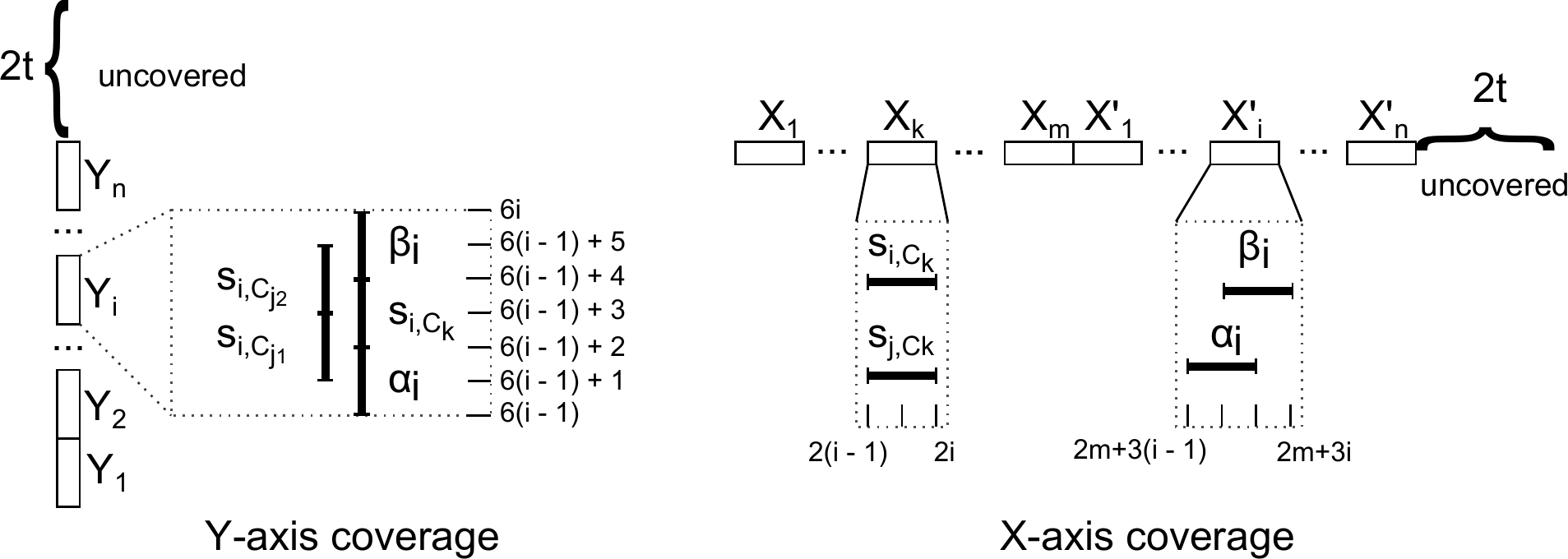}
  \caption{An illustration of the $y$ and $x$ positioning of the sensors in $S$.  The $Y$-axis subfigure only shows the coverage of the sensors on the $Y$ axis, and
    does not depict the $x$ coordinates of the sensors (and the $X$-axis subfigure does not depict the $y$ coordinates).  The depicted $Y_i$ gadget corresponds to
    variable $x_i$ occurring in clauses $C_{j_1}, C_{j_2}$ and $C_k$, where $x_i$
    occurs in the same manner in the first two.  An example of these clauses might be 
    $C_{j_1} = (x_i \vee x_1), C_{j_2} = (x_i \vee x_2)$ and $C_k = (\overline{x_i} \vee x_3)$.  The depicted $X_k$ gadget corresponds to a clause $C_k$ containing the
    two variables $x_i$ and $x_j$.  Finally, the depicted $X'_i$ gadget contains $\alpha_i$ and $\beta_i$, which are partially overlapping.}
  \label{fig:minnum_reduction}
\end{figure}

We now describe how the sensors are laid out on the $x$ axis.
Each clause $C_k \in \C$ has a corresponding gadget $X_k$ that covers the range 
$[2(k - 1) .. 2k]$, and is constructed as follows.
Let $x_i$ and $x_j$ be the two variables occurring in $C_k$.  Then $X_k$ contains the sensors 
$s_{i, C_k}$ and $s_{j, C_k}$, and we set $x(s_{i, C_k}) = x(s_{j, C_k}) = 2k - 1$.
Thus $X_i$ covers the $[2(k - 1) .. 2k]$ range
and $X_1, \ldots, X_m$ cover the range $[0 .. 2m]$.  Note that so far every sensor $s_{i, C_j}$ for $i \in [n], j \in [m]$ has been placed.
We finally create one gadget $X'_i$ for each pair $(\alpha_i, \beta_i)$.  More precisely, for each $i \in [n]$, 
let $X'_i$ contain the $\alpha_{i}, \beta_{i}$ sensors, and set 
$x(\alpha_{i}) = 2m + 3(i - 1) + 1$ 
and $x(\beta_{i}) = 2m + 3(i - 1) + 2$.
Then $X'_i$ covers the range $[2m + 3(i - 1) .. 2m + 3i]$, and 
$X_1, \ldots, X_m, X'_1, \ldots, X'_n$ cover
the range $[0 .. 2m + 3n]$.
Recall that $m = \frac{3}{2}n$, 
and so $2m + 3n = 6n$.
Therefore, the $x$-axis also has the range $(6n .. 6n + 2t]$ uncovered.

It is not hard to see that this construction can be carried out in polynomial time.
We now show that the \maxsatpb{} instance admits an assignment of 
$x_1, \ldots, x_n$ that satisfies $t$ clauses of $\C$
if and only if it is possible to cover the square $R$ by moving $t$ sensors in the corresponding \minnum{} instance.

($\Rightarrow$): Suppose  there is a truth assignment of $x_1, \ldots, x_n$ that satisfies $t$ clauses $C'_1, \ldots, C'_t$ of $\C$.
We claim that it is possible to move $t$ sensors and completely cover $R$.  Observe that the uncovered portion $R'$ of the MinNum-WCR instance consists of a $2t \times 2t$ area. 
For each $k \in [t]$, let $x_{i_k}$ be a variable whose value in the assignment  
satisfies $C'_k$ (either true or false).
Then we move the sensors in the set $S' = \{s_{i_1, C'_1}, s_{i_2, C'_2}, \ldots, s_{i_t, C'_t} \}$, and place them diagonally in $R'$, that is at positions $(6n + 2(i - 1) + 1, 6n + 2(i - 1) + 1)$ for $i \in  \{1, \ldots, t \}$. It is easy to see that $R'$ is now covered. 
To show that the rest of $R$ remains covered, it suffices now to show that no position that was covered before moving $S'$ has become uncovered.  

On the $x$-axis, each sensor of $S'$ belongs to 
a clause gadget $X_i$, and no two sensors of $S'$ belong to the same
clause gadget (since we picked exactly one sensor per
satisfied clause in $S'$).
Since, for each clause gadget $X_i$, there are two sensors
covering the same $x$ range, moving only one cannot leave a portion of $X_i$ uncovered on the $x$ axis.  
We deduce that the $x$-axis is completely covered.
As for the $y$-axis, suppose there is an uncovered position
in some gadget $Y_i$ after moving $S'$.  Let $C_{j_1}, C_{j_2}$ and $C_k$ be the
clauses containing $x_i$, where $x_i$ appears in the same manner in $C_{j_1}$ and $C_{j_2}$.  As $\alpha_i$ and
$\beta_i$ were not moved and yet moving $S'$ leaves uncovered positions within $Y_i$, it must be the case that
$s_{i, C_k}$ and one of $\{s_{i, C_{j_1}}, s_{i, C_{j_2}}\}$, say $s_{i, C_{j_1}}$ without loss of generality,  
were moved.  By the construction of $S'$, this implies that 
the assignment of $x_i$ satisfies clauses $C_k$ and $C_{j_1}$.
However, $x_i$ appears differently in the two clauses 
(positively in one, negatively in the other), 
contradicting the validity of the initial assignment.
We deduce that the $y$-axis is completely covered.

($\Leftarrow$): Suppose that the \minnum{} instance allows 
coverage of the square by moving a set $S'$ of sensors such that 
$|S'| \leq t$.
We claim that the following assignment of $x_1, \ldots, x_n$
is valid and satisfies at least $t$ clauses of $\C$:
for each sensor $s_{i, C_j} \in S' \setminus \{\alpha_1, \ldots, \alpha_n, \beta_1, \ldots, \beta_n\}$, assign to $x_i$ the value that makes it satisfy $C_j$.  If
there are unassigned variables afterward, assign them arbitrarily.  Note that a variable $x_i$ 
might be assigned multiple times, so we must show that it is not assigned both true and false.

First note that as each axis has an uncovered range of length $2t$ and each sensor can only cover a range of $2$ on both axes, we must have $|S'| = t$.  Also note
that, by the same argument, each sensor of $S'$ must be moved to $R'$, that is, the $[6n\ldots 6n+2t]$ range
on both the $x$ and $y$-axes.
In particular, no sensor is moved inside the $6n \times 6n$ area that was
initially covered before moving $S'$.
Thus, the solution
$S'$ cannot contain two sensors 
$s_{i, C_j}$ and $s_{i, C_k}$ such that $x_i$
occurs differently in $C_j$ and $C_k$, as this would create 
a gap in the $Y_i$ gadget on the $y$ axis.  This shows that
$x_i$ cannot be assigned to both true and false, and
so our assignment is valid.
Also, no $\alpha_i$ or $\beta_i$ sensor 
can belong to $S'$, since otherwise a gap would be created either on the $x$ or $y$ axis.
Moreover, $S'$ does not contain two sensors $s_{i, C_k}$ and $s_{j, C_k}$ from the same clause gadget $X_k$, 
as this would create a gap on the $x$ axis.  
We deduce that each sensor of $S'$ belongs to a distinct clause
gadget $X_k$ of $S'$.  Combined with the facts that 
$|S| = t$ and $s_{i, C_j} \in S'$ implies that $x_i$ satisfies $C_j$,
it follows that the constructed assignment satisfies at least $t$ clauses of $\C$.
\end{proof}

\subsection{An efficient algorithm for integer configurations}

We now show that there is a polynomial algorithm to solve the  MinNum-WCR problem for integer  configurations when all sensor 
diameters are equal to 1, and the rectangle to be covered is displaced by 0.5 from the integer grid that contains sensor positions (see Figure~\ref{fig:example1}).  It is not hard to see that there always exists an optimal solution to the MinNum-WCR problem which produces a final blocking configuration which is also an integer configuration. 

Consider an integer configuration $(R, p_1, p_2, \ldots, p_n)$ as specified above.  The position of a sensor is a   pair $(i, j)$ where $i$ is said to be the {\em row}   and $j$ is the {\em column} in which the sensor is located. 
A row $i$ (or column $j$ ) so that no sensor is located in it  is called an {\em row gap}  (resp. {\em column gap}), and some sensor needs to be moved to cover such a row or column. Let $r$ and $c$ be the number of rows and columns gaps in the initial configuration. Clearly in the final blocking configuration, there are no uncovered rows or columns. By moving a sensor, we can cover a row or column gap or both. For example, if row $i$ and column $j$ are both gaps, moving a sensor to position $(i, j)$ covers both row $i$ and column $j$. However, moving a sensor may also create a new  row or column gap. To understand better the net effect of moving a sensor based on the other sensors in its row and column, 
we introduce the following classification of sensors (see Figure \ref{fig:example1} for an illustration). 

\begin{figure}[!htb]
\begin{center}
\includegraphics[width=0.95\textwidth]{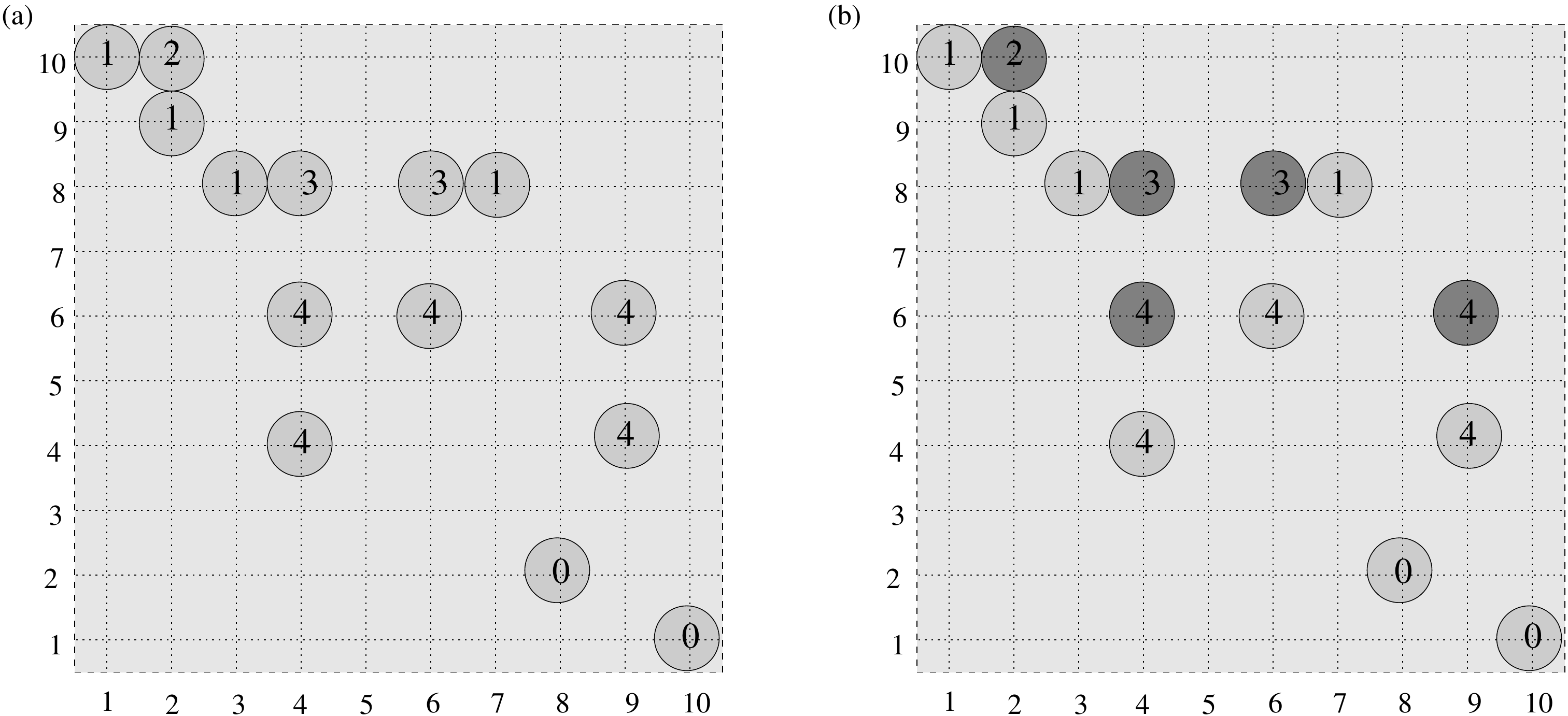}
\end{center}
\caption{(a) Classification of sensors for the MinNum algorithm. Notice that at most 2 free sensors can be removed from row 6 or column 4 without creating
a new gap. (b) A maximum free set of sensors is shown in dark gray.}
\label{fig:example1}
\end{figure}

\begin{definition}
Let $S_k$ be a sensor in position $(i,j)$.
We say that 
\begin{enumerate}
\item  $S_k$  is {\em  free} if there is
at least one other sensor located in row $i$ and at least one other sensor in row $j$.

\item
$S_k$ is of {\em type 0}  if $S_k$ is 
the only sensor located in row $i$ and the only sensor in column $j$.
\item $S_k$ is of {\em type 1}  if either 
there is another sensor located in row $i$ but no other sensor located in column $j$ or there is no other sensor located in row $i$ but there 
is another sensor located in column $j$.
\item
 $S_k$ is of {\em type 2}  if it is a free sensor, and  there is at least one sensor of type $1$ located in row $i$, 
and also at least one sensor  of type 1 in column $j$.
\item
 $S_k$  is of {\em type 3} if it is a free sensor, and  if there is at least one sensor of type  1 located in row $i$ (column $j$)  
and only free sensors in column $j$ (row $j$).
\item
$S_k$  is of {\em type 4} if it is a free sensor, and  the only sensors located in row $i$ and column $j$ are all free  sensors. 
\end{enumerate}
\end{definition}


We call a move of a sensor a {\em sliding move} if the final position of the sensor is either in the same row or column as its initial position. We call a move a {\em jumping} move if the final position is in a different row {\em and} column from its initial position. 

Consider a sensor of type 0. Any  sliding move of such a  sensor  creates an additional row or column gap and can cover at most one other gap.
Any jumping move of this sensor  creates both a row and  a column gap,  and it can cover at most one previous row gap and one column gap. 
Thus the total number of row and column gaps cannot decrease  by  moving a sensor of Type 0. In what follows, we can therefore assume that sensors of type 0 and the rows and columns in which they reside are not considered any further. 
  
A sensor of Type 1 which has another sensor in its row can make a sliding move in its column and cover a row gap.  Any jumping move of this sensor can cover one row and one column gap, but it creates a column gap. Thus  the total number of  row and column gaps decreases by at most 1 by  moving a sensor of Type 1

However, if there is a row gap $i$ and a column gap $j$, then a free sensor (of Type 2, 3, or 4) can make a jumping move to position $[i,j]$ and  cover both row $i$ and column $j$, without creating any new row or column gaps. Therefore, moving a free sensor can reduce the total number of row and column gaps by  $2$. However, moving a free sensor can change the types of sensors in its row or column, and moving {\em multiple} free sensors can create new empty rows or columns as for example removing all free sensors from row  $6$ in Figure \ref{fig:example1}.  

Denote the set of free sensors by $F$. We define a {\em maximum free set} to be a maximum-sized subset $M$ of free sensors that can all be
removed at the same time without creating  new row and column gaps (see Figure~\ref{fig:example1}(b) for an example). The following theorem shows that a maximum free set can be found in polynomial time.

%
%
\begin{theorem}\label{th:MFS}
Given an integer configuration as input,  a maximum free set $M$ can be found in $O(n^{3/2})$ time. 
\label{thm:maxfreeset}
\end{theorem}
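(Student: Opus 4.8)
\textbf{The plan} is to recast the search for a maximum free set as a minimum covering problem and then solve that problem by bipartite matching. First I would pin down exactly which rows and columns are at risk of emptying out when free sensors are removed. Since every type-$0$ sensor (together with its row and column) has already been discarded, each occupied row either contains a type-$1$ sensor --- which is never free and hence never moved, so the row can never empty out --- or consists solely of free sensors. Call a row of the latter kind a \emph{free row}, define \emph{free columns} analogously, and let $P$ and $Q$ be the sets of free rows and free columns. Because a free sensor always has a neighbour both in its row and in its column, each free row and each free column contains at least two free sensors. Removing a set $M \subseteq F$ then produces no new gap precisely when the retained set $K = F \setminus M$ still has a sensor in every free row and every free column. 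Thus maximizing $|M|$ is equivalent to choosing a minimum-size $K \subseteq F$ that \emph{hits} every element of $P \cup Q$, with $|M| = |F| - |K|$.

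Next I would exploit the fact that a free sensor at position $(i,j)$ can contribute to covering at most one requirement on each side: the row $i$ (only if $i \in P$) and the column $j$ (only if $j \in Q$). In terms of the classification, a type-$4$ sensor lies in a free row \emph{and} a free column, so it can cover one requirement from $P$ and one from $Q$ at once; a type-$3$ sensor covers exactly one requirement; and a type-$2$ sensor covers none and may always be discarded. This turns the minimization into a minimum set cover in which every covering set has size at most two, equivalently a minimum edge cover of the bipartite graph $H$ whose two sides are $P$ and $Q$ and whose edges are the type-$4$ sensors, with the type-$3$ sensors serving as singleton covers. I would then establish that the optimum equals $|P| + |Q| - \nu(H)$, where $\nu(H)$ is the maximum matching size of $H$: given a maximum matching, retain its $\nu(H)$ sensors and add one covering sensor for each of the remaining $|P|+|Q|-2\nu(H)$ requirements, which is always possible because every requirement has at least two covering free sensors.

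The main obstacle is the optimality of the matching, i.e.\ the lower bound $|K| \ge |P| + |Q| - \nu(H)$ for \emph{every} valid cover $K$. Here I would invoke the Gallai / K\H{o}nig--Egerv\'ary duality between minimum edge cover and maximum matching, applied to the subgraph of $H$ spanned by the type-$4$ sensors actually used in $K$: splitting $K$ into its type-$4$ part (an edge cover of the requirements it spans) and its type-$3$ part (singletons), one sees that no cover can do better than trading matched pairs for single covers, so the saving over the trivial cover of size $|P|+|Q|$ is at most $\nu(H)$; isolated requirements, coverable only by type-$3$ sensors, cost one each and are accounted for by the same identity. Combining both directions gives $|M| = |F| - |P| - |Q| + \nu(H)$, so a maximum free set is read off from any maximum matching of $H$. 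Finally, since $H$ has $O(n)$ vertices and $O(n)$ edges, computing $\nu(H)$ with the Hopcroft--Karp algorithm runs in $O(n^{3/2})$ time, while classifying the sensors and building $H$ cost only $O(n\log n)$ and are dominated, yielding the claimed $O(n^{3/2})$ bound.
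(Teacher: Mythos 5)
Your proposal is correct and follows essentially the same route as the paper: both reduce the maximum free set to its complement, a minimum blocking (hitting) set for the all-free rows and columns, model type-$4$ sensors as edges of a row--column graph, and solve the resulting minimum edge cover via a maximum matching in $O(n^{3/2})$ time. The only difference is bookkeeping for type-$3$ sensors --- the paper folds them into a single edge-cover instance by adding auxiliary vertices $x,y$ with edges $e_{ix}$ and $e_{xy}$, whereas you treat them as singleton covers and prove the size formula $|P|+|Q|-\nu(H)$ directly by a Gallai-type counting argument --- which is an equivalent, and arguably cleaner, way to arrive at the same algorithm.
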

\begin{proof}
Define $X$ to be the set of rows and columns that contain only the sensors in $F$, and let $B \subseteq F$ be a minimum-sized subset of $F$ so that every row (or column) in $X$ contains a sensor in $B$. We call $B$ a minimum blocking set for $X$. Then clearly $M$ is a maximum free set if and only if $F-M$ is  a minimum blocking set. 
Therefore, to find a maximum free set, we proceed by finding a minimum blocking set.

Consider a graph $G=(V,E)$ defined as follows. The vertex set $V$ contains a vertex corresponding to  every row and column in $X$; we call the vertex $c_i$ if it corresponds to column $i$ and $r_j$ if it corresponds to row $r_j$. We also introduce two extra vertices $x$ and $y$. For every Type 4 sensor at position $(i,j)$, we introduce an edge $e_{ij}$ between the vertices $r_i$ and $c_j$. For every
Type 3 sensor that has only free sensors in its row $i$ (resp. column $i$), we introduce an edge $e_{ix}$ between vertex $r_i$  (resp. $c_i$) and the vertex $x$. Finally, we add the edge $e_{xy}$ between vertices $x$ and $y$. 

We claim that $B$ is a  blocking set for $X$ if and only if $E' \cup \{ e_{xy} \}$ forms an edge cover in the above graph $G$, where $E'$ is the set of edges corresponding to sensors in $B$. To see this, observe that if a sensor of type $4$ at position $(i,j)$  is in $ B$, then the corresponding edge $e_{ij}$ covers both vertices $r_i$ and $c_j$ in the graph. Similarly, if a type 3 sensor at position $(i, j)$ that has free sensors only in its row (resp.  column) is in $B$ then the corresponding edge $e_{ix}$ covers vertices $r_i$ (resp. $c_i$) and $x$. Furthermore $e_{xy}$ covers both $x$ and $y$. Since $B$ blocks all rows and columns in $X$, it follows that all vertices in $G$ are covered by $E' \cup \{ e_{xy} \}$. Conversely, consider an edge cover in $G$. It must include the edge $\{x, y \}$ since it is the only edge incident on $y$. Additionally, any set of edges that covers the remaining vertices in $G$ must be incident on all vertices corresponding to the set $X$, and therefore corresponds to a set of sensors that blocks $X$. This completes the proof of the claim. 

Since an edge cover can be found via a maximum matching in $O(\sqrt{V} E) = O(n^{3/2})$ time \cite{GareyJohnson}, we can find a minimum blocking set, and thereafter, a maximum free set $M$ in $O(n^{3/2})$ time. \hfill $\Box$
\end{proof}

Assume that in the given MinNum-WCR configuration there are $r$ row gaps and $c$ column gaps. 
We can assume without loss of generality that $r \geq c$.  We now give an algorithm to solve the MinNum-WCR problem: \\

\begin{tcolorbox}
\noindent
    {\bf Algorithm 1: The MinNum-WCR Algorithm}\\
{\bf Input}: $I$ an integer configuration with $r$ row gaps and $c$ column gaps, with $r \geq c$ \\

Construct the maximum free set $M$ for $I$.\\
Recalculate the types of sensors after removing the sensors in $M$.\\
Let $k$ be the number of free sensors in $M$.
\begin{description}
\item[Case  $k \geq r$:]  Move $c$ sensors from $M$ using jumping moves to cover  the first $c$ row gaps and all $c$ column gaps. 
Then  $r-c$  sensors from $M$ use sliding moves to cover the remaining row gaps. The total number of sensors moved is $c+(r-c)=r$.
\item[Case  $c\leq k <r$:] Move $c$ sensors from $M$  using jumping moves to cover the first $c$ row gaps and all $c$  column gaps. Move the remaining $k-c$  sensors from $M$  to cover  $k-c$ row gaps.   Finally $r-k$ sensors of Type 1 use sliding moves to cover the remaining row gaps. The total number of sensors moved
  is $c+(k-c)+(r-k)=r$.

\item[Case $k < c$ :] Move $k$ sensors from $M$ using jumping moves to cover $k$ row and column gaps. Then we use sliding moves of sensors of Type 1
to cover the remaining row and column gaps. In total $k +(c-k) + (r-k)=r+c-k$ sensors are moved.
\end{description}
\end{tcolorbox}

\begin{theorem}
  Given an integer configuration, where all sensor ranges are $0.5$, and the rectangle $R$ to be covered is displaced by 0.5 in both axes,  Algorithm 1 solves the MinNum-WCR problem in $O(n^{3/2})$ time. 
\end{theorem}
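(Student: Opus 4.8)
The plan is to prove two things: that Algorithm~1 always outputs a valid blocking configuration using the stated number of moves (the upper bound), and that no blocking configuration is reachable with fewer moves (the lower bound); the $O(n^{3/2})$ running time then follows immediately from Theorem~\ref{th:MFS}, since constructing $M$ dominates, while computing the gaps, classifying the sensor types, recomputing types after deleting $M$, and assigning the moves are all linear (after an $O(n\log n)$ sort by row and by column). Throughout I use the basic fact, forced by the $0.5$ displacement of $R$, that a configuration is blocking if and only if every integer row and every integer column of the grid contains a sensor; the task is thus exactly to eliminate all $r$ row gaps and all $c$ column gaps, and I may assume $r \ge c$ by transposing the grid.

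For the lower bound I would argue directly on the final configuration. First, a clean per-row count gives $\mathrm{OPT} \ge r$: each of the $r$ initially empty rows must contain a relocated sensor at the end, and distinct empty rows require distinct sensors. Second, call a relocated sensor a \emph{doubler} if it ends in a cell whose row and column were \emph{both} initially empty. Counting the relocated sensors lying in empty rows and in empty columns and applying inclusion--exclusion yields $m \ge r + c - p$, where $p$ is the number of doublers. The heart of the lower bound is the inequality $p \le k$: the original positions of the doublers form a set of free sensors whose simultaneous removal leaves every row and column covered, so by the definition of a maximum free set $p \le |M| = k$. Combining the two bounds gives $\mathrm{OPT} \ge \max(r,\, r+c-k)$, which equals $r$ when $k\ge c$ and equals $r+c-k$ when $k<c$ --- exactly the move counts produced by the three cases of Algorithm~1.

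For the upper bound I would verify case by case that the prescribed moves are realizable and create no new gaps. The enabling facts are: (i) any subset of $M$ can be vacated without creating a gap, since $M$ is a maximum free set and removing fewer sensors can only help; (ii) a jumping move of a free sensor onto a (row gap, column gap) cell covers both gaps while leaving its old row and column covered; and (iii) a vertical slide of a sensor that has a companion in its row covers a row gap while remaining in --- and hence preserving --- its column (symmetrically for column gaps). It then remains to check that enough sliders survive once the $k$ free sensors of $M$ are consumed. Here I would invoke the feasibility hypothesis $\sum_k 2r_k \ge \max\{a,b\}$, i.e. $n \ge \max\{\rho,\gamma\}$ for the numbers $\rho,\gamma$ of rows and columns: after deleting $M$ the remaining row surplus is $(n-k)-(\rho-r) \ge r-k$ and the remaining column surplus is $(n-k)-(\gamma-c)\ge c-k$, so the $(r-k)$ residual row gaps and $(c-k)$ residual column gaps left after the jumping moves can all be filled by distinct sliders.

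The main obstacle I anticipate is making $p \le k$ fully rigorous. The difficulty is that an arbitrary optimal solution may \emph{compensate} a doubler whose departure empties its original row or column by routing another relocated sensor into that row or column, so the doubler need not be free at its origin. I would remove this possibility with an exchange argument: whenever such compensation occurs, redirect the compensating sensor to the cell the doubler was covering and leave the doubler in place. This does not increase the number of moves, and repeating it --- collapsing any cycles of compensations, which only decreases the count --- yields an optimal solution in which every doubler is free and is supported by unmoved companions in its row and column; for that solution the doublers are genuinely removable simultaneously, giving $p \le k$. The remaining verifications, namely that the greedy slider assignment never over-depletes a row or column, follow from the surplus counts above by a routine Hall/$b$-matching argument, which I expect to be straightforward.
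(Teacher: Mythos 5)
Your proposal is correct and shares the paper's overall skeleton---the upper bound is verified through the same three cases of Algorithm~1 (jumping moves by the sensors of the maximum free set $M$, then sliding moves by sensors with a row companion, with the feasibility hypothesis $n \ge \max\{a,b\}$ supplying, via the same pigeonhole/surplus count, enough sliders), and the $O(n^{3/2})$ running time is inherited from Theorem~\ref{thm:maxfreeset} exactly as in the paper---but your lower bound takes a genuinely different, and more careful, route. The paper argues operationally: it notes $\mathrm{OPT} \ge r$ since each empty row needs its own moved sensor, and for $k < c$ it asserts in one line that at most $k$ moves can each cover a row gap and a column gap simultaneously, ``since by maximality of $M$ all remaining sensors are not free'' and hence every further move is a sliding move reducing the gap count by at most $1$. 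This silently identifies the double-covering sensors of an arbitrary optimal solution with a simultaneously removable free set, which is precisely the point you flag as the main obstacle: an optimal solution may jump a non-free sensor into a (row gap, column gap) cell and repair its vacated row or column with another moved sensor, so the doublers' origins need not be removable as they stand. Your reformulation---$\mathrm{OPT} \ge r$ and $\mathrm{OPT} \ge r + c - p$ by inclusion--exclusion over relocated sensors ending in initially empty rows and columns, followed by the exchange/normalization argument establishing $p \le k$---is a sounder formalization of the optimality claim than the paper's per-move gap-reduction accounting, and it buys a proof that works directly on final configurations rather than on an implicit sequentialization of moves. The one piece you must still write out in full is that exchange lemma: a compensating sensor may itself be covering an initially empty line at its destination, so redirecting it to the doubler's target can lose coverage elsewhere, and the cascades and cycles you mention need an explicit termination and coverage-preservation argument; but this is a gap in your exposition, not in your approach---indeed the paper's own proof elides this very difficulty rather than resolving it.
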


\begin{proof}
  First we prove 
  that Algorithm 1 produces a blocking configuration. Clear\-ly, the sensors in $M$ can be moved without creating new row or column gaps. Assume now that all sensors in $M$ have been used by our algorithm to cover the gaps , but there still remains a row gap (or column gap). Recall that the number of sensors  is assumed to be at least as large as the longer side of the rectangle. 
Thus,  by the pigeonhole principle, there exists a row (or column, respectively)
  that contains more than one sensor. Such sensors must be of Type 1.
One  of them can make a sliding move along its column (or row) to cover the row gap, without creating any column gap or any other row gap. This shows that as long as there are gaps, there are sensors of Type 1 available to fill them as needed in the algorithm. 

  Next we show that Algorithm 1 moves an optimal number of sensors. Given an input configuration with $r$ empty rows and $c$ empty columns, assume
  without loss of generality that $r \geq c$. Observe that at least $r$ sensors need to be moved to cover all empty rows, thus $r$ is a lower bound on the number of sensors to be moved. If $k \geq c$, Algorithm 1 moves exactly $r$ sensors, and it is optimal in the first two cases. 

Suppose instead that  $k < c$. At most $k$ row and column gaps can be covered with sensors from $M$. It follows from the maximality of $M$ that all remaining sensors are  not free, that is, moving any of them must be done using a sliding move, which 
reduces either the number of row gaps or the number of column gaps by at most $1$. Thus, in total we need to move  $k+ (c-k)+(r-k)$
sensors. Therefore Algorithm 2 moves an optimal number of sensors in this case as well.

Given a list of sensors with their coordinates, we can calculate in $O(n)$ time  a list of the number of nodes in each row and a list of the number of nodes in each column. By an $O(n)$  scan of these lists we  can find all nodes of Type $0$ or $1$, and we can mark the rows and columns containing sensors of Type 1.  Now, an additional $O(n)$-time scan of the lists determines the types of all other nodes. By Theorem~\ref{thm:maxfreeset}, a maximum free set of nodes can be constructed in $O(n^{3/2})$ time. After removing the max free set of sensors $M$, we can update the types of nodes in  $O(n)$ time.   Row and column gaps  can be found in  $O(n)$ time. New positions can be calculated in  $O(n)$ time. Thus the total time taken by Algorithm 2 is $O(n^{3/2})$.
\end{proof}

\noindent
{\em Remark:} Notice that the algorithm described in this section assumes (a) that the sensors are initially at integer positions, and (b) that the rectangle $R$ to be covered is displaced from the integer grid by 0.5 in each direction and (c) sensor diameters are 1. All these assumptions are necessary for the algorithm to produce a valid solution.

\section{MinSum-WCR Problem}

In this section, we study the MinSum-WCR problem. The MinSum problem is known to be NP-hard in the one-dimensional case of heterogeneous sensors on a line segment barrier \cite{adhocnow2010}. The following theorem is therefore an immediate consequence: 
\begin{theorem}
  For a heterogeneous sensor network, the MinSum-WCR  problem is NP-complete  
  for both Manhattan and Euclidean metrics.
\end{theorem}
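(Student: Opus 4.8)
The plan is to observe that the one-dimensional MinSum barrier-coverage problem for heterogeneous sensors on a line segment, proved NP-hard in \cite{adhocnow2010}, embeds into MinSum-WCR as the special case of a \emph{thin} rectangle, and to make that embedding precise. For membership in NP I would argue in the standard way that some optimal blocking configuration is described by polynomially many bits: the final positions can be taken at the $O(n)$ combinatorial ``events'' (endpoints of covering intervals), so a certificate is the list of final positions together with its induced sum, verifiable in polynomial time; in the instances produced by the reduction below all motion is along a single line, so the relevant distances are simply $|x_i - x_i'|$ and the usual sum-of-square-roots obstruction to checking Euclidean thresholds does not arise. The substance is therefore the hardness reduction.

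Given a 1D instance with barrier $[0,L]$ on the $x$-axis and sensors $S_1,\dots,S_n$ of ranges $r_1,\dots,r_n$ located on the line $y=0$ at positions $x_1,\dots,x_n$ (with $\sum_i 2r_i \ge L$), I would build the WCR instance whose rectangle is $R=[0,L]\times[0,b]$ with $0<b<\min(L,\min_i r_i)$, placing each $S_i$ at $(x_i,0)$ and keeping its range $r_i$. Since $b<L$ we have $\max\{a,b\}=L$, so the feasibility hypothesis $\sum_i 2r_i\ge\max\{a,b\}$ is exactly the 1D feasibility condition. I would then claim that the optimal MinSum-WCR value equals the optimal 1D value, which yields the reduction in both directions.

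The proof of that equality rests on two geometric facts I would verify. First, a \emph{projection} step: starting from any blocking configuration, reset every sensor's final $y$-coordinate to $0$. This never increases any moved distance (the initial positions already lie on $y=0$, and both metrics are monotone in the displaced coordinate), so it does not increase the objective; and it preserves weak coverage, since for a sensor centred on $y=0$ a vertical line $x=c$ meets its disk iff $|c-x_i'|\le r_i$ (the disk always reaches height $0\in[0,b]$), a condition that already held before projection, while for $b<\min_i r_i$ any disk whose $x$-extent overlaps $[0,L]$ also pierces every horizontal chord $y=c$, $c\in[0,b]$. Hence there is an optimal WCR solution with all $y_i'=0$. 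Second, for configurations confined to $y=0$, weak coverage of $R$ is equivalent to $\bigcup_i [x_i'-r_i,\,x_i'+r_i]\supseteq[0,L]$ (vertical lines give exactly the 1D covering condition, horizontal lines being free for small $b$), and the cost $\sum_i |x_i-x_i'|$ coincides with the 1D objective. Combining the two facts, minimizing over WCR solutions equals minimizing over 1D solutions, so a threshold is met in one instance iff it is met in the other.

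The only delicate point, and the step I expect to be the main obstacle, is the ``horizontal coverage is free'' claim inside the projection step: one must confirm that, once the $x$-intervals cover $[0,L]$ at height $0$, every horizontal chord at height $c\le b$ is still pierced by some disk, which is precisely where the choice $b<\min_i r_i$ is used. The NP-membership bookkeeping (polynomial-size optimal certificates) is routine and, as noted, the square-root comparison issue is sidestepped because the reduction keeps all motion on a single line.
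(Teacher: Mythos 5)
Your proposal is correct and takes essentially the same approach as the paper: the paper disposes of this theorem in one line, observing that it is an immediate consequence of the NP-hardness of the one-dimensional heterogeneous MinSum problem on a line segment \cite{adhocnow2010}, which is precisely the thin-rectangle embedding you carry out explicitly. The details you supply (projection to $y=0$, the coverage equivalence, and the NP-membership bookkeeping) are left implicit in the paper and are sound under its definition of a blocking configuration, where a sensor centred on $y=0$ with range exceeding $b$ blocks every horizontal line regardless of its $x$-position, so even your cautious ``$x$-extent overlaps $[0,L]$'' proviso is not needed.
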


If the sensors are homogeneous and we consider the Manhattan metric, then
to minimize the sum of the movements we can first minimize the sum of all horizontal movements by applying the known, one-dimensional, $O(n\log n)$ MinSum algorithm \cite{adhocnow2010} to the $x$-coordinates of the sensors, and second we minimize the sum of all vertical movements by applying this $O(n\log n)$ MinSum algorithm to  the $y$-coordinates of the sensors. It is easy to see that this gives an optimal
solution. Thus we have the following result.

\begin{theorem}
If all sensor ranges are equal, there is an $O(n\log n)$ algorithm to solve MinSum-WCR for the Manhattan metric.
\end{theorem}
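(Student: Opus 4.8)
The plan is to exploit the additivity of the Manhattan metric to decouple the two-dimensional problem into two independent one-dimensional barrier-coverage problems, one per axis, and then invoke the known one-dimensional MinSum algorithm on each.

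First I would record the two structural facts that drive the argument. Writing each sensor's final position as $p_k' = (x_k', y_k')$, the Manhattan cost splits as
$$\sum_{k=1}^n d(p_k, p_k') = \sum_{k=1}^n |x_k - x_k'| + \sum_{k=1}^n |y_k - y_k'|,$$
so the objective is a sum of a function of the $x_k'$ alone and a function of the $y_k'$ alone. Second --- and this is the crux --- I would show that the blocking (weak-coverage) constraint likewise separates. A sensor of range $r$ centred at $(x_k', y_k')$ blocks exactly the vertical lines whose abscissa lies in $[x_k'-r,\, x_k'+r]$ and exactly the horizontal lines whose ordinate lies in $[y_k'-r,\, y_k'+r]$. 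Hence every vertical line crossing $R$ is blocked if and only if the intervals $[x_k'-r,\, x_k'+r]$ cover the horizontal extent of $R$, a condition on the $x_k'$ only; symmetrically, every horizontal line is blocked if and only if the intervals $[y_k'-r,\, y_k'+r]$ cover the vertical extent of $R$, a condition on the $y_k'$ only. Since the two coordinates of a sensor may be chosen independently, the set of blocking configurations is exactly the Cartesian product of the set of feasible abscissa-tuples and the set of feasible ordinate-tuples.

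Given these two facts the minimisation factorises: because both the feasible region and the objective are products over the two coordinates, the minimum of $f+g$ over the product equals $(\min f) + (\min g)$, so an optimal solution is obtained by independently minimising $\sum_k |x_k - x_k'|$ subject to the abscissae covering the horizontal extent of $R$, and $\sum_k |y_k - y_k'|$ subject to the ordinates covering the vertical extent. Each of these is precisely the one-dimensional homogeneous MinSum barrier-coverage problem with the sensors' coordinates along that axis as their initial line positions, which is solved in $O(n\log n)$ by the known algorithm \cite{andr2016}. Running it once per axis and combining the two coordinate assignments yields an optimal blocking configuration in $O(n\log n)$ total time.

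I expect the only genuine obstacle to be the separation lemma in the second paragraph, specifically the clean justification that the feasible set is a Cartesian product: one must verify both that coverage along each axis depends on that axis's coordinates alone (which follows from the fixed, homogeneous range) and that no hidden coupling forces the same sensor's chosen abscissa and ordinate to be linked. Everything after that is the standard $\min(f+g) = \min f + \min g$ argument over a product domain together with a citation, so no nontrivial calculation remains.
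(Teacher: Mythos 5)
Your proposal is correct and takes essentially the same route as the paper: the paper also decouples the Manhattan objective into horizontal and vertical components and applies the known one-dimensional $O(n\log n)$ MinSum algorithm \cite{adhocnow2010,andr2016} once to the $x$-coordinates and once to the $y$-coordinates. The separation argument you spell out (coverage of vertical lines depends only on abscissae, of horizontal lines only on ordinates, so the feasible set is a product and $\min(f+g)=\min f+\min g$) is exactly the step the paper dismisses as ``easy to see.''
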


\noindent
{\em Remark:} Unlike the algorithm for MinNum-WCR, the algorithm for MinSum-WCR works for arbitrary input configurations, and sensor ranges that are equal but of arbitrary size.

\section{MinMax-WCR Problem}

As seen in the previous section, the complexity of MinSum-WCR  is very
similar to the complexity of the MinSum barrier coverage of a line segment.
However, this is not the case for the MinMax-WCR problem. For a line segment barrier, the MinMax problem can be solved using an $O(n \log n)$ algorithm even in the heterogeneous case \cite{swat2012}. Surprisingly, the
MinMax-WCR problem is NP-hard even for a integer configuration with a very
restricted possible move size, as shown in Theorem \ref{thm:Iminmax} below.

Throughout this section we will assume that  the rectangle to be covered is displaced by 0.5 in both directions, that the given configurations are integer configurations and all sensor diameters are equal to 1 (i.e., ranges are equal to $0.5$). 
A \emph{line} will be used to represent an interval of length $1$ (either vertical or horizontal),
the line being in the center of the interval.  We place a line at every integer coordinate inside the rectangle on both axes.
We denote the $x$ and $y$  coordinates of the center of a sensor $p$ by $x(p)$ and $y(p)$, respectively.
We then say a sensor $p$ \emph{blocks} a vertical line $x$ if $x(p) = x$, and $p$ blocks a horizontal line 
$y$ if $y(p) = y$ (i.e. $p$ covers the whole interval corresponding to the line).
For convenience, we may define a sensor $p$ as a position, i.e.  we may write $p = (x, y)$
to indicate that $x(p) = x$ and $y(p) = y$.

Notice that for integer configurations,  with sensor ranges equal to $0.5$, if every row and every column of the integer grid
overlaying the given rectangle is blocked by a sensor, then this gives a blocking configuration (in particular, blocking line $1$ on the $x$-axis covers the leftmost part of the rectangle, as it
is displaced by 0.5).
This will be used in the sequel, and in pictures we will only indicate the locations of sensors on the grid. 
First we need some additional definitions.
\begin{definition}
Given configurations $C = (R,p_{1},\dots,p_{k})$ and $C' = (R,q_{1},\dots,q_{k})$, we define \emph{max-$d$-distance} of $C$ and $C'$ as $\max_{1\le i\le k} d(p_{i},q_{i})$. 
\end{definition}

\begin{definition}
  An \emph{$a\times b$ integer rectangle}, denoted by $R_{a\times b}$, is the following subset of
  $\mathbb{Z}^{2}$:
  \begin{equation*}
    \{[i,j]\in \mathbb{Z}^{2}:\ 1\le i\le a,\ 1\le j\le b\} \,.
  \end{equation*}
 A vertical line $x = i$, will be referred to as the \emph{vertical line} $i$.
  Similarly, a horizontal line $y = j$, will be referred to as the \emph{horizontal line} $j$. 
  An integer configuration $C$ is called \emph{blocking} if for every integer $1\le i\le a$, vertical line $i$ is blocked by a sensor in $C$ and for every $1\le j\le b$, horizontal line $j$ is blocked by a sensor in $C$.
\end{definition}


The proof of NP-completeness of the MinMax-WCR problem will follow from the following results. We first show that the \emph{MinMax $(V, H)$-Blocking Problem}, a more general version of MinMax-WCR, is NP-hard.
We then present a reduction from the $(V, H)$-Blocking problem to MinMax-WCR.

\begin{definition}
  Let $R_{a\times b}$ be an integer rectangle, $V$ be a subset of $\{1,\dots,a\} $ and $H$ a subset of $\{1,\dots,b\}$. These sets will represent vertical and
  horizontal lines ($V$-lines and $H$-lines, respectively) that need to be blocked. A configuration $T = (R_{a\times b},q_{1},\dots, q_{k})$ is called \emph{$(V,H)$-blocking} if for every $i\in V$ (respectively, $j\in H$), vertical
  line $i$ (respectively, horizontal line $j$) is blocked in $T$.    
\end{definition}

\begin{problem}[MinMax $(V,H)$-Blocking Problem]\label{prob:IMinMaxVH}
  Given a distance metric $d$, an integer rectangle $R_{a\times b}$, an integer configuration $I = (R_{a\times b},p_{1},\dots,p_{k})$, sets $V\subseteq \{1,\dots, a\} $
  and $H\subseteq \{1,\dots, b\}$, and a real number $D > 0$, decide if there
  exists a $(V,H)$-blocking configuration $T = (R_{a\times b},q_{1},\dots,q_{k})$ at max-$d$-distance at most $D$ from $I$.
\end{problem}

\begin{lemma}\label{lem:IMinMaxVH1}
 The MinMax $(V,H)$-Blocking Problem with Manhattan or Euclidean metric is NP-complete for 
  $D = 1$. In addition, the problem remains NP-complete if the instance
 $I = (R_{a\times b},p_{1},\dots,p_{k})$  with given $V$ and $H$ satisfies: 
 \begin{enumerate}[label = (VH\arabic *)]
 \item \label{item:VH1}
    no sensor of $I$ lies on vertical line $a$
    and on horizontal line $b$,
  \item \label{item:VH2}
    $a\notin V$ and $b\notin H$.
  \end{enumerate}
\end{lemma}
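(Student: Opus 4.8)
The plan is to first dispatch membership in NP and then give a polynomial-time reduction from a satisfiability problem, exploiting the observation that at $D=1$ the Manhattan and Euclidean versions coincide on integer instances.

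Membership in NP is immediate once I argue that a blocking final configuration may be assumed integer. To block an integer vertical (resp.\ horizontal) line, a sensor must have integer $x$- (resp.\ $y$-) coordinate equal to that line, and from an integer start $p_i$ the only integer positions within distance $1$ (in either metric) are $p_i$ itself and its four grid neighbours. Moving a sensor off the grid can only destroy its ability to block a line on one axis, so it is never advantageous. Hence a blocking configuration within distance $1$ exists iff one exists in which every $q_i$ lies in the cross $N(p_i) = \{p_i,\ p_i\pm(1,0),\ p_i\pm(0,1)\}$; this yields a polynomial-size certificate and, at the same time, shows that the two metrics define exactly the same combinatorial problem. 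I will therefore reason purely about this discrete cross-move problem for the rest of the argument.

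The crucial structural fact is that each sensor, wherever it lands in $N(p_i)$, blocks exactly one vertical line and one horizontal line, and these two choices are \emph{coupled}: a sensor can deviate from $p_i$ on at most one axis, so a horizontal deviation forces it to keep blocking horizontal line $y(p_i)$, and vice versa. I would exploit this coupling to build gadgets, reducing from $3$-SAT (a bounded-occurrence variant keeps the construction sparse). A variable gadget consists of a small cluster of sensors together with a pair of required lines arranged so that covering both lines forces one distinguished sensor to commit either to a ``true'' orientation (deviating horizontally) or a ``false'' orientation (deviating vertically); the complementary line is then necessarily covered by a partner sensor, and I would chain such partners to propagate the chosen truth value to all occurrences of the corresponding literal. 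A clause gadget introduces one required line that can be blocked within distance $1$ only by a sensor arriving from a literal occurrence set to its satisfying orientation, so the clause line is coverable iff the clause is satisfied.

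Satisfying the extra conditions \ref{item:VH1} and \ref{item:VH2} is a routine final step: I would enlarge $R_{a\times b}$ by one unit in each direction, place no gadget sensor on the last row or last column, and include neither $a$ in $V$ nor $b$ in $H$, so the top-right corner stays vacant and the extreme lines are unrequired. Correctness then splits into the two standard directions: a satisfying assignment dictates, gadget by gadget, a choice of final positions in each $N(p_i)$ that blocks every required line, while conversely any blocking configuration within distance $1$ must, by the forcing built into each gadget, induce a consistent assignment satisfying every clause. The main obstacle I anticipate is the gadget design itself — arranging positions so that the distance-$1$ coupling rigidly enforces the intended binary choice while guaranteeing that gadgets sharing a common grid do not accidentally help block one another's lines; verifying this non-interference, rather than the padding or the membership argument, is where the real work lies.
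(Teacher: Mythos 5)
There is a genuine gap, and it sits exactly where the paper has to work hardest. Your integrality step --- ``moving a sensor off the grid can only destroy its ability to block a line on one axis, so it is never advantageous'' --- is only sound if each required line must be covered by a \emph{single} sensor sitting exactly on it. But a line here stands for an interval of length $1$, and a solution at max-distance $1$ may cover that interval \emph{collectively} by several fractionally moved sensors: under the Manhattan metric a sensor displaced by $(0.6,\,0.4)$ contributes partial coverage to new intervals on \emph{both} axes, contradicting your claim, and two or three such sensors can jointly cover a required line that none of them blocks individually. This is precisely why the paper proves a dedicated transformation lemma (its Lemma~\ref{lbl:fraction-to-integer}) converting any fractional blocking solution into an integer one, and that argument is \emph{not} instance-independent: it exploits the specific gadget layout (which sensors lie within reach of which $H$-lines, and the guaranteed unrequired lines surrounding each group) and performs coordinated rearrangements of two or three sensors at a time --- e.g., resetting $q$ and $r$ while moving $p$ down by a full unit --- rather than rounding each sensor in isolation, which can fail when partial coverages chain across consecutive lines. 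So you are not entitled to ``reason purely about the discrete cross-move problem'' until an analogous normalization is proved \emph{for your construction}; as written, both your NP-membership argument and your forcing claims rest on it.

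Second, the reduction itself is a plan rather than a proof: the variable gadget, the propagation ``chains,'' the clause gadget, and the non-interference verification are all left unspecified, and you yourself flag the gadget design as ``where the real work lies.'' In the paper that work \emph{is} the proof: a reduction from 3-SAT(2,2) with 3-clause gadgets whose private $V$-line forces at least one of three sensors to spend its entire budget moving left (setting a literal to \textsc{true}), switch gadgets admitting exactly five blocking patterns, and variable gadgets assembled from two switches and two 2-clause gadgets so that consistency follows (either the lines $a_2,a_3'$ or the lines $a_3,a_2'$ are blocked, never a mixture --- the analogue of a claim and corollary you would still need to state and prove), plus a global placement that makes (VH1) and (VH2) hold automatically. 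Your padding idea for (VH1)/(VH2) is fine, and your observations that the two metrics coincide on integer unit moves and that a unit horizontal move freezes the vertical coordinate match the paper's, but without concrete gadgets, their rigidity analysis, and the fractional-to-integer lemma, the proposal does not establish the statement.
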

\begin{proof} 
  We will say that a sensor $p$ \emph{partially blocks} a line if $p$ covers a non-empty subset 
of the interval represented by the line, but does not block it.

  Now, for the proof of NP-completeness, it is easy to check that the problem is in NP.
  We will prove the lemma by reduction from 3-SAT(2,2), a SAT problem with 3-clauses only in which every variable has exactly two positive occurrences and two
  negated occurrences. This problem was shown to be NP-complete in \cite{ber2003}. Let $S$ be an instance of 3-SAT(2,2) problem with variables $x_{1},\dots,x_{n}$ and
  clauses $c_{1},\dots,c_{m}$. 
  Start with an integer configuration $I = (\Rab)$ with no sensors, where $a = 16n + 4m$ and $b = 24n$, and $V = H = \emptyset$. 
  For each clause and each variable of the instance, we will include in $I$ a ``gadget'' consisting of a subset of horizontal and vertical lines, some of them
  included in $V$ and $H$, some of them shared with other gadgets and a collection of sensors that lie at intersections of these lines. If in a solution $T$ for configuration $I$, a sensor of a gadget blocks a horizontal or vertical line, we say the
  line is blocked by the gadget. 
Given the initial instance $I = (\Rab, p_1, \ldots, p_k)$ and a solution $T = (\Rab ,q_{1},\dots, q_{k})$, we say that a sensor \emph{moved up} if 
$y(q_i) < y(p_i)$, it \emph{moved down} if $y(q_i) > y(p_i)$, it \emph{moved left} if $x(q_i) < x(p_i)$
and it \emph{moved right} if $x(q_i) > x(q_i)$.


Construction and properties of gadgets:

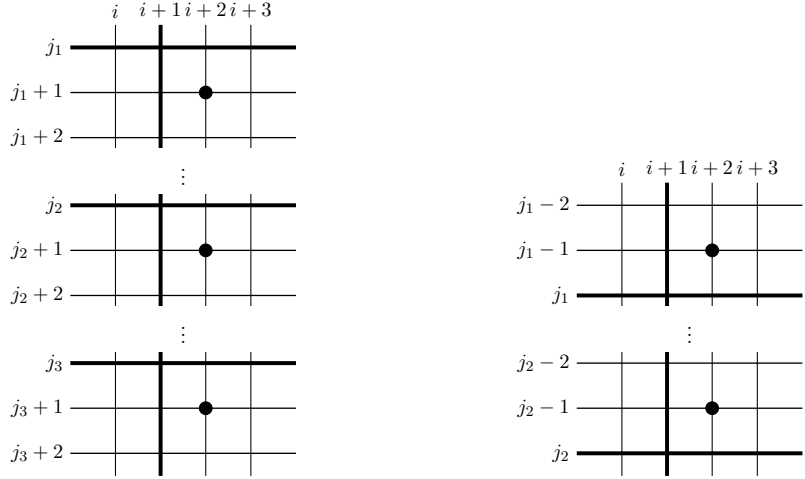
\begin{figure}[h]
  \centering
  \begin{tikzpicture}[scale = .6,every node/.style = {scale = .7}]
    \foreach \y/\lab/\w in {
      1/$j_{3} + 2$/.5,
      2/$j_{3} + 1$/.5,
      3/$j_{3}$/1.5,
      4.5/$j_{2} + 2$/.5,
      5.5/$j_{2} + 1$/.5,
      6.5/$j_{2}$/1.5,
      8/$j_{1} + 2$/.5,
      9/$j_{1} + 1$/.5,
      10/$j_{1}$/1.5}
    {
      \draw [line width = \w pt] (0,\y )--(5,\y );
      \node [left]at (0,\y ) {\lab };
    }
    \foreach \x/\lab in {1/$i$,2/$i + 1$,3/$i + 2$,4/$i + 3$}
    {
      \draw  (\x ,0.5)--(\x ,10.5);
      \node [above] at (\x ,10.5) {\lab };    
    }
    \draw [line width = 1.5pt] (2,0.5)--(2,10.5);
    \draw [fill] (3,2) circle (4pt);
    \draw [fill] (3,5.5) circle (4pt);
    \draw [fill] (3,9) circle (4pt);
    \path [fill = white] (0,3.25) rectangle (6,4.25);
    \node at (2.5,3.75) {$\vdots $};
    \path [fill = white] (0,6.75) rectangle (6,7.75);
    \node at (2.5,7.25) {$\vdots $};
  \end{tikzpicture}
  \hspace{2cm}
  \begin{tikzpicture}[scale = .6,every node/.style = {scale = .7}]
    \foreach \y/\lab/\w in {
      4.5/$j_{2}$/1.5,
      5.5/$j_{2} - 1$/.5,
      6.5/$j_{2} - 2$/.5,
      8/$j_{1}$/1.5,
      9/$j_{1} - 1$/.5,
      10/$j_{1} - 2$/.5}
    {
      \draw [line width = \w pt] (0,\y )--(5,\y );
      \node [left]at (0,\y ) {\lab };
    }
    \foreach \x/\lab in {1/$i$,2/$i + 1$,3/$i + 2$,4/$i + 3$}
    {
      \draw  (\x ,4)--(\x ,10.5);
      \node [above] at (\x ,10.5) {\lab };    
    }
    \draw [line width = 1.5pt] (2,4)--(2,10.5);
    \draw [fill] (3,5.5) circle (4pt);
    \draw [fill] (3,9) circle (4pt);
    \path [fill = white] (0,6.75) rectangle (6,7.75);
    \node at (2.5,7.25) {$\vdots $};
  \end{tikzpicture}  
  \caption{A 3-clause gadget (left) and a 2-clause gadget (right). The $H$-lines and $V$-lines of each gadget are depicted in bold. The sensors of the gadget that are added to configuration $I$ are depicted by dots.
  Note that only the center point of the sensor is represented, and not its whole coverage area.}
  \label{fig:clausegadget}
\end{figure}
 
  \paragraph{3-clause gadgets.} For a clause of $S$ we construct a \emph{3-clause gadget} as follows. The gadget uses 4 consecutive vertical lines, say $i,i + 1,i + 2,i + 3$ and 3 groups of 3 consecutive horizontal lines, say $j_{1},j_{1} + 1,j_{1} + 2$, $j_{2},j_{2} + 1,j_{2} + 2$ and $j_{3},j_{3} + 1,j_{3} + 2$. The vertical lines are not shared with any other gadget, while only horizontal lines shared with other gadgets are $j_{1}$, $j_{2}$ and $j_{3}$ --- each shared with exactly one variable gadget corresponding to one literal of the clause. We add $i + 1$ to $V$, and $j_{1},j_{2},j_{3}$ to $H$. In addition, we add sensors $(i + 2,j_{1} + 1)$, $(i + 2,j_{2} + 1)$ and $(i + 2,j_{3} + 1)$ to $I$, cf. Figure~\ref{fig:clausegadget}. Since $i + 1\in V$ and there are no sensors on vertical lines $i$ and $i + 1$ (they are not shared with any other gadget) and only sensors on vertical line $i + 2$ are the three sensors of this gadget, in any solution $T$, at least one of them has to move left by the full distance $D = 1$. The sensors of the gadget that do not move left by $1$ will correspond to literals that we say are set to value \F{} by $T$, and remaining of the three sensors of the gadget to literals that are set to value \T{} by $T$ (note that setting a negated occurrence $\neg x$ to \T{} corresponds to setting $x$ to \F{}). 

  \begin{claim}\label{cl:clause}
    In any solution $T$ for configuration $I$, for each clause, one of its literals is set to \T{} by $T$.
  \end{claim}

  \begin{proof}
    This follows since at least one sensor of the clause gadget moves left by $1$ to block the $V$-line of the gadget.
  \end{proof}
  
  Note that each of the horizontal lines $j_{1},j_{2},j_{3}\in H$, is either blocked by a sensor of the clause gadget (if the corresponding literal is \F ), or needs to be blocked by a sensor of the corresponding variable gadget, which we now describe. 

\begin{figure}
  \centering
  \begin{tikzpicture}[scale = .6,every node/.style = {scale = .7}]
    \foreach \y/\lab/\w in {
      1/$a_{2} = j_{2} + 1$/1.5,
      2/$a_{4} = j_{2}$/1.5,
      4/$a_{3} = j_{1} + 1$/1.5,
      5/$a_{1} = j_{1}$/1.5}
    {
      \draw [line width = \w pt] (0,\y )--(5,\y );
      \node [left]at (0,\y ) {\lab };
    }
    \foreach \x/\lab in {1/$i$,2/$i + 1$,3/$i + 2$,4/$i + 3$}
    {
      \draw  (\x ,0.5)--(\x ,5.5);
      \node [above] at (\x ,5.5) {\lab };    
    }
    \draw [line width = 1.5pt] (2,0.5)--(2,5.5);
    \draw [fill] (3,1) circle (4pt);
    \draw [fill] (3,4) circle (4pt);
    \path [fill = white] (0,2.5) rectangle (6,3.5);
    \node at (2.5,3) {$\vdots $};
  \end{tikzpicture}  
  \caption{A switch gadget.}
  \label{fig:switchgadget}
\end{figure}
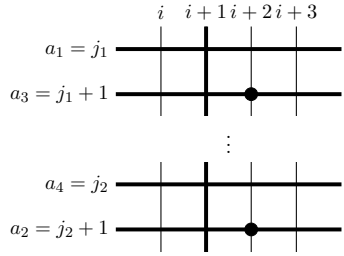

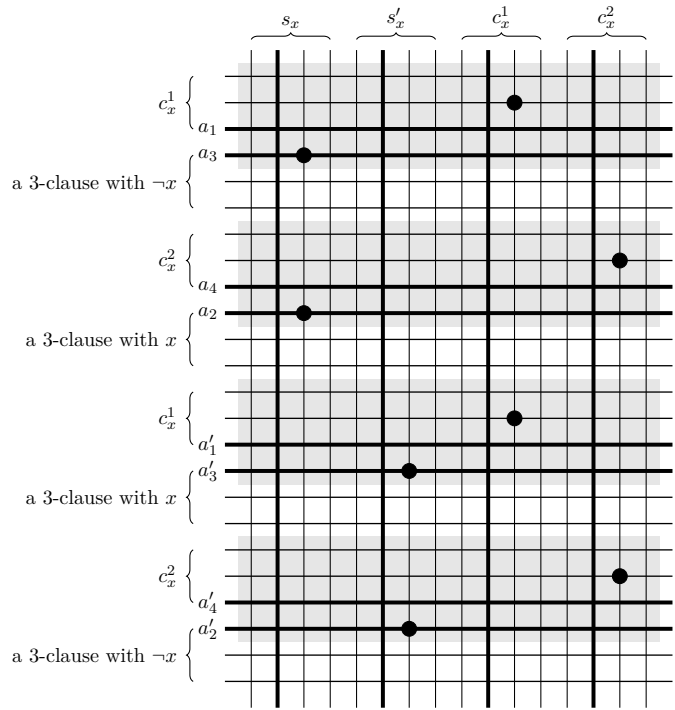
\begin{figure}
  \centering
  \begin{tikzpicture}[scale = .7,every node/.style = {scale = .7}]
    \path [fill = white!90!black] (0.5/2,2.5/2) rectangle (16.5/2,6.5/2);   
    \path [fill = white!90!black] (0.5/2,8.5/2) rectangle (16.5/2,12.5/2);   
    \path [fill = white!90!black] (0.5/2,14.5/2) rectangle (16.5/2,18.5/2);   
    \path [fill = white!90!black] (0.5/2,20.5/2) rectangle (16.5/2,24.5/2);   
    \foreach \y/\lab/\w in {
      24/ /0.5,
      23/ /0.5,
      22/$a_{1}$/1.5,
      21/$a_{3}$/1.5,
      20/ /0.5,
      19/ /0.5,
      18/ /0.5,
      17/ /0.5,
      16/$a_{4}$/1.5,
      15/$a_{2}$/1.5,
      14/ /0.5,
      13/ /0.5,
      12/ /0.5,
      11/ /0.5,
      10/$a_{1}'$/1.5,
      9/$a_{3}'$/1.5,
      8/ /0.5,
      7/ /0.5,
      6/ /0.5,
      5/ /0.5,
      4/$a_{4}'$/1.5,
      3/$a_{2}'$/1.5,
      2/ /0.5,
      1/ /0.5}
    {
      \draw [line width = \w pt] (0,\y/2 )--(8.5,\y/2 );
      \node [left]at (0,\y/2 ) {\lab };
    }
    \foreach \x in {1,...,16} {
      \pgfmathparse{Mod(\x,4)==2?1:0}
      \ifnum \pgfmathresult>0 
      \draw [line width = 1.5pt] (\x/2 ,0)--(\x/2,12.5);      
      \else 
      \draw (\x/2 ,0)--(\x/2,12.5);
      \fi 
    }
    \foreach \x/ \y in {3/15,3/21,7/3,7/9,11/11,11/23,15/5,15/17}
    \draw [fill] (\x/2 ,\y/2 ) circle (4pt);
    \draw [decoration = {brace,raise = 4pt},decorate] (0.5,12.5)--(2,12.5) node [above,midway,yshift = 8pt] {$s_{x}$};
    \draw [decoration = {brace,raise = 4pt},decorate] (2.5,12.5)--(4,12.5) node [above,midway,yshift = 8pt] {$s_{x}'$};
    \draw [decoration = {brace,raise = 4pt},decorate] (4.5,12.5)--(6,12.5) node [above,midway,yshift = 8pt] {$c_{x}^{1}$};
    \draw [decoration = {brace,raise = 4pt},decorate] (6.5,12.5)--(8,12.5) node [above,midway,yshift = 8pt] {$c_{x}^{2}$};
    \foreach \y/ \lab in {11/$c_{x}^{1}$,5/$c_{x}^{1}$,8/$c_{x}^{2}$,2/$c_{x}^{2}$,
      9.5/a 3-clause with $\neg x$,
      6.5/a 3-clause with $x$,
      3.5/a 3-clause with $x$,
      0.5/a 3-clause with $\neg x$
    }
    \draw [decoration = {brace,raise = 12pt},decorate] (0,\y )--(0,\y+1 ) node [left,midway,xshift = -22pt] {\lab };
  \end{tikzpicture}    
  \caption{A variable gadget for $x$ (shaded area) consisting of two switch gadgets $s_{x}$ and $s_{x}'$ and two 2-clause gadgets $c_{x}^{1}$ and $c_{x}^{2}$. Note that some horizontal lines are shared with four 3-clause gadgets: $a_{2}$ and $a_{3}'$ with 3-clause gadgets containing positive occurrences of $x$, and $a_{3}$ and $a_{2}'$ with 3-clause gadgets containing negated occurrences of $x$. Horizontal lines through unshaded areas do not belong to this variable gadget, but belong one of the four 3-clause gadgets.}
  \label{fig:variablegadget}
\end{figure}

  \paragraph{Variable gadgets.} Variable gadgets are constructed using \emph{2-clause gadgets}, which are defined in the same way as 3-clause gadgets but have one less group of horizontal lines and are flipped vertically, cf. Figure~\ref{fig:clausegadget}, and \emph{switch gadgets} described as follows. A switch gadget uses 4 consecutive vertical lines, say $i,i + 1,i + 2,i + 3$ and 2 groups of 2 consecutive horizontal lines, say $a_{1} = j_{1},a_{3} = j_{1} + 1$ and $a_{4} = j_{2},a_{2} = j_{2} + 1$. The vertical lines are not shared with any other gadget. Each of the these horizontal lines is shared with a clause gadget such that $a_{1},a_{2},a_{3},a_{4}\in H$, and $j_{1} - 1,j_{1} + 2,j_{2} - 1,j_{2} + 2\notin H$. We add $i + 1$ to $V$ and sensors $(i + 2,j_{1} + 1)$ and $(i + 2,j_{2} + 1)$ to $I$ (each sensor corresponds to a literal of the clause). As above, at least one of the two sensors has to move left by $D = 1$ in any solution. Consequently, in an integer blocking configuration there are five possibilities which of the horizontal lines $a_{1},\dots,a_{4}$ are blocked by the sensors of the switch gadget: $\{a_{1},a_{2}\}$, $\{a_{2}\}$, $\{a_{2},a_{3}\}$, $\{a_{3}\}$ and $\{a_{3},a_{4}\}$. 

We now introduce a variable gadget for each variable $x$ of $S$. It will consists of two switch gadgets: $s_{x}$ with horizontal lines $a_{1},\dots,a_{4}$ and $s_{x}'$ with lines $a'_{1},\dots,a'_{4}$, and two 2-clause gadgets: $c_{x}^{1}$ with $H$-lines $a_{1}$ and $a'_{1}$ and $c_{x}^{2}$ with $H$-lines $a_{4}$ and $a'_{4}$. To specify the relative position of these four gadgets in $\Rab $, we will denote $i$, $j_{1}$ and $j_{2}$ of gadget $g$ by $i(g)$, $j_{1}(g)$ and $j_{2}(g)$. 
Note that horizontal lines $a_{1},a'_{1},a_{4},a'_{4}$ are shared between the two switch gadgets and two 2-clause gadgets. The remaining horizontal lines of the switch gadgets, $a_{2},a'_{2},a_{3},a'_{3}$, are shared with four 3-clause gadgets for clauses of $S$ containing $x$ or $\neg x$, cf. Figure~\ref{fig:variablegadget}. Note also that we set $i(s_{x}') = i(s_{x}) + 4$, $i(c_{x}^{1}) = i(s_{x}) + 8$, $i(c_{x}^{2}) = i(s_{x}) + 12$, $j_{1}(c_{x}^{2}) = j_{1}(c_{x}^{1}) + 6$, $j_{2}(c_{x}^{1}) = j_{1}(c_{x}^{1}) + 12$ and $j_{2}(c_{x}^{2}) = j_{1}(c_{x}^{1}) + 18$. As a result all four gadgets will be placed into a $16 \times 24$ rectangle as depicted in Figure~\ref{fig:variablegadget}.

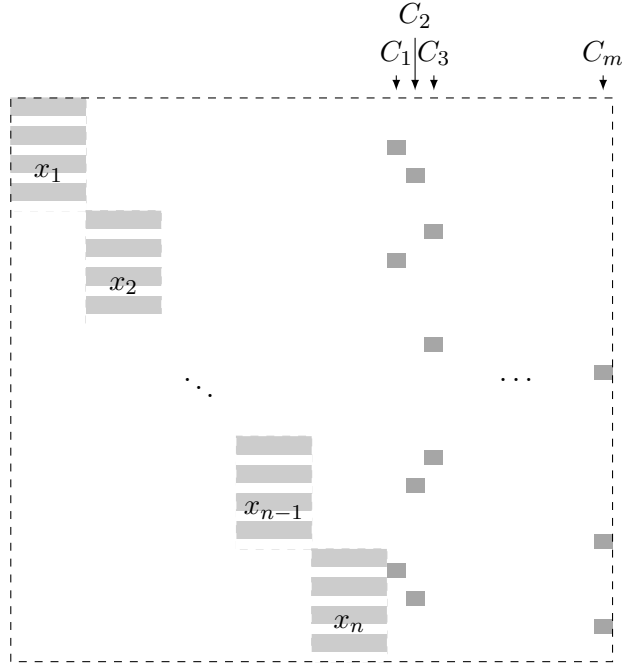
\begin{figure}
  \centering
  \begin{tikzpicture}
    \foreach \x/\lab in {0/$x_{1}$,1/$x_{2}$,3/$x_{n - 1}$,4/$x_{n}$}
    {
      \draw [fill = white!80!black,dashed,color = white!80!black] (\x ,-1.5*\x-1.5) rectangle (\x+1 ,-1.5*\x);
      \foreach \y in {0,1,2,3}
      \path [fill = white] (\x,-1.5*\x-1.5+1.5*\y/4) rectangle (\x+1,-1.5*\x-1.5+1.5*\y/4+0.125);
      \node at (\x+.5,-1.5*\x-1) {\lab };
    }
    \node at  (2.5,-3.75) {$\ddots $};
    \foreach \x/\y/\z in {
      5/0/2,5/1/2,5/4/1,
      5.25/0/3,5.25/3/2,5.25/4/2,
      5.5/1/1,5.5/2/1,5.5/3/1,
      7.75/2/2,7.75/3/4,7.75/4/3
    }
    {
      \path [fill = white!65!black] (\x ,-1.5*\y-0.375*\z ) rectangle (\x+0.25 ,-1.5*\y-0.375*\z+0.1875);
    }
    \node at  (6.75,-3.75) {$\dots $};
    \foreach \x/\lab/\y in {0/$C_{1}$/.3,1/$C_{2}$/.8,2/$C_{3}$/.3,11/$C_{m}$/.3}
    {
      \draw [->] (5.125+\x*0.25,\y )--(5.125+\x*0.25,.1);
      \node [above] at (5.125+\x*0.25 ,\y ) {\lab };
    }
    \draw [dashed] (0,-7.5) rectangle (8,0);
  \end{tikzpicture}
  \caption{An illustration of an integer configuration for a given 3-SAT(2,2) instance containing clause $C_{1} = x_{1}\lor x_{2}\lor \neg x_{n} $ and $C_{m} = x_{1}\lor x_{n - 1}\lor x_{n}$. On the left there are $n$ variable gadgets followed by $m$ clause gadgets. }
  \label{fig:configuration}
\end{figure}

   We have described all ingredients of the construction. To complete the construction we will now describe how the gadgets are placed relatively to each other in $\Rab$. We place the gadget for $x_{1}$ to the top left corner of $\Rab$, the gadget for $x_{2}$ diagonally below, and so on, cf. Figure~\ref{fig:configuration}. Horizontal lines of the clause gadgets are already determined by placement of variable gadgets (see the description of variable gadget), so it is sufficient to choose vertical lines of these gadgets. We place them in the order as they appear in $S$, each taking next the four vertical lines. Note that the construction satisfies that $a\notin V$ and $b\notin H$ and there on sensors on vertical line $a$ and horizontal line $b$, which justifies the second part of the statement of the lemma.


Before proceeding to the equivalence between instances, we need the following results.  We first establish the existence of an integer blocking configuration, 
assuming that some solution exists, and show that the variable assignments made by such a configuration must be consistent.

\begin{lemma}\label{lbl:fraction-to-integer}
Suppose that there exists a blocking configuration $T$ for configuration $I$ as constructed above, 
such that each sensor moves by distance at most $D=1$ (under either the Manhattan or Euclidean metric). 
Then $T$ can be transformed, in polynomial time, into an integer blocking configuration $T'$.
\end{lemma}

\begin{proof}
We show how to transform $T$ so that every sensor either stays put, or moves in exactly one direction by $1$ or $-1$. 
We first claim that  $T$ can be transformed into a solution $T'$ in which every sensor moves in only one direction, 
and that the sensors that move horizontally move left by exactly $1$.
First observe that for every sensor $p$, there is a unique vertical line $v(p) \in V$ that $p$ can block (i.e. each sensor has a $V$-line on its left, but not on its right).
Moreover, $v(p)$ is at horizontal distance exactly $1$ from $p$.
Thus for each $v \in V$, there must be at least one sensor $p$ with $v(p) = v$ such that $p$ moves left by $1$ in order to block $v$ (and $p$ does not move vertically).
Suppose now that a sensor $p$ does not block $v(p)$, but $p$ has a non-zero horizontal movement within $(-1, 1]$.
Then there is another sensor $p'$ that blocks $v(p)$, where $p'$ moves left by $1$.   
Since $v(p)$ is blocked and $p$ has no $V$-line on its right, we may set the horizontal movement of $p$ to $0$ in $T'$ without affecting the set of blocked lines.  
This proves our claim, since this can be applied to every sensor.
Since this claim holds for either the Manhattan or Euclidean metric, the rest of the proof also applies to both
since no sensor moves diagonally now.

We now assume that the above transformations have been applied and that solution $T$ satisfies our above claim.
We show how each sensor can be made to move by distance $0$ or $1$.
Suppose that there is a sensor that moves by a fractional distance (i.e. in $(0, 1)$).
By our assumption, this sensor must move vertically.
Note that there are $3$ types of sensors: those from 2-clause, 3-clause and switch gadgets.
We handle each possible case separately.

First let $p$ be a sensor of a 2-clause gadget $c^1_x$ or $c^2_x$.
In $I$, $p$ lies on a line $h \notin H$, with $h + 1, h + 2 \in H$ and $h - 1, h + 3, h + 4 \notin H$.
Moreover there are sensors $q$ on line $h + 2$ (from an $s_x$ or $s'_x$ gadget) and $r$ on line $h + 3$ (from the 3-clause gadget).
Note that there are no other sensors placed on the lines from $h - 1$ to $h + 4$, and so these are the only three available to block
$h + 1, h+2$.
Suppose that $p$ moves vertically by a fractional amount.
If $p$ moves up, then it does not contribute to blocking any line, so its position can be reset and its movement becomes $0$ (in both directions, since $p$ is assumed to move only vertically).
If $p$ moves down, it only partially blocks line $h + 1$, and so sensor $q$ must move up to fill the gap.
But then line $h + 2$ is only partially blocked, and so sensor $r$ must in turn move up.
In other words, all three sensors $p,q,r$ move vertically in solution $T$, and are not used to block any $V$-line.
We can thus obtain an alternate solution $T'$ by resetting the positions of $q$ and $r$, and we move $p$ down by $1$
so that it covers the line $h + 1$.  
Thus we may assume that each sensor $p$ of a 2-clause gadget satisfies our claim.

Suppose now that a sensor $q$ from a $s_x$ or $s'_x$ gadget moves vertically by a fractional amount.
Let $h + 2$ be the line that $q$ lies on in $I$, and define $p$ on line $h$ and $r$ on line $h + 3$ as above.
If $q$ moves up, then $p$ must move down to block the gap on $h + 1$ and $r$ must move up to block the gap on $h + 2$.
Thus all three of $p,q,r$ move vertically, and the alternate solution obtained by not moving $q$ and $r$ and moving $p$
down by $1$ achieves the desired result.
If $q$ moves down, then $p$ must move down by $1$ to block $h + 1$ and $r$ must move up by $1$ to block 
the gap on $h + 2$.  Thus $q$ does not contribute to blocking any line and its position can be reset to $0$ movement.
In this situation all sensors move by a unit amount.
This covers the sensors from $s_x$ or $s'_x$ gadgets.

We now assume that all thee above transformations have been performed, so that sensors from 2-clause or switch
gadget do not move fractionally.
Suppose that a sensor $r$ from a 3-clause gadget lying on line $h + 3$ moves vertically by a fractional amount,
and define the corresponding sensors $p$ and $q$ as above.
If $r$ moves down, it does not contribute to blocking a line and we may reset its position.
If $r$ moves up, then it only partially covers line $h + 2$, and so $q$ must cover the remaining gap.
But as assumed above, $q$ moves vertically by distance either $0$ or $1$, and so in order to block this gap it cannot move vertically at all.
Thus $r$ does not contribute to blocking line $h + 2$, and its position can be reset to $0$ vertical movement.
This concludes the details of the transformation.
The fact that all the necessary modifications can be done in polynomial time is straightforward and omitted.
\end{proof}

Note that for $D = 1$, the movement of a sensor in an integer blocking configuration 
is the same under both the Manhattan or Euclidean metric.  Thus the following results hold in both cases.

 Recall in the description of 3-clause gadget we set the truth value of each literal. The following claim will help us establish consistency of this assignment for each variable. 
\begin{claim}\label{cl:var}
  For any integer blocking configuration  $T$ for $I$ and any variable $x$ in SAT instance $S$, the sensors of the variable gadget for $x$ block
  either horizontal lines $a_{2},a'_{3}$ and not horizontal lines
  $a_{3},a'_{2}$, or vice versa.
\end{claim}
\begin{proof}
In an integer blocking configuration, the 2-clause gadget $c_{x}^{1}$ can block at most one of the $H$-lines $a_{1}$ and $a_{1}'$. Without loss of generality assume that $a_{1}$ is not blocked by $c_{x}^{1}$. Then it has to be blocked by switch gadget $s_{x}$. Hence, by the discussion above, the set of $H$-lines blocked by $s_{x}$ is $\{a_{1},a_{2}\}$. Therefore, $a_{3}$ and $a_{4}$ are not
  blocked by $s_{x}$. Hence, $a_{4}$
  must be blocked by 2-clause gadget $c_{x}^{2}$, and
  hence, $a'_{4}$ is not blocked by $c_{x}^{2}$. It follows that $s_{x}'$ blocks $a'_{3}$ and $a'_{4}$, and does not
  block $a'_{1}$ and $a'_{2}$. This is the first case of the
  claim. We leave it to the reader to check that assuming $a'_{1}$ is not blocked by $c_{x}^1$ leads to the second case. This concludes the proof of the claim.
\end{proof}

\begin{corollary}\label{col:var}
  For any integer blocking configuration  $T$ for configuration $I$, for each variable $x$, it cannot happen that a positive literal of $x$ is set to \T{} by $T$ and simultaneously, a negative literal of $x$ is set to \T{} by $T$.
\end{corollary}

\begin{proof}
  Assume for contrary that both a positive literal and a negated literal of $x$ are set to \T{} by their 3-clause gadgets. Hence, the $H$-lines corresponding to these literals are not blocked by these 3-clause gadgets, and must be blocked by the variable gadget for $x$, which is not possible by Claim~\ref{cl:var}.
\end{proof}

   \paragraph{Equivalence between instances.}

   Let $k = 8n + 3m$. It remains to show that there exists a  $(V,H)$-blocking configuration $T = (\Rab,q_{1},\dots, q_{k})$ at max-$d_{M}$-distance at most $1$ from $I$ if and only if $S$ is satisfiable. Assume that there exists such a configuration $T'$.  Then by Lemma~\ref{lbl:fraction-to-integer}, we can obtain from $T'$ an integer blocking configuration $T$ for $I$.  
We define a variable assignment $\alpha $ as follows: $\alpha (x_{i}) = \T $ if at least one of the positive occurrences of $x_{i}$ in $S$ is set to \T{} by $T$ and $\alpha (x_{i}) = \F $ if both positive occurrences of $x_{i}$ are set to \F{} by $T$. This assignment has the following property: if a literal $\ell $ is set to \T{} by $T$, then $\alpha (\ell ) = \T $ (note that it can happen that $\ell $ is set to \F{} by $T$ and $\alpha (\ell ) = \T $). This is clear for positive occurrences of variables from the definition of $\alpha $. Suppose by contradiction that for a variable $x_{i}$, a negated occurrence $\neg x_{i}$ is set to \T{} by $T$, but $\alpha (\neg x_{i}) = \F $. Then $\alpha (x_{i}) = \T $, and so there is a positive occurrence $x_{i}$ is set to \T{} by $T$, which is a contradiction with Corollary~\ref{col:var}. Now it follows by Claim~\ref{cl:clause} that $\alpha $ is a satisfiable assignment for $S$.

\begin{figure}
  \centering
  \begin{tikzpicture}[scale = .6,every node/.style = {scale = .6}]
    \path [fill = white!90!black] (0.5/2,2.5/2) rectangle (16.5/2,6.5/2);   
    \path [fill = white!90!black] (0.5/2,8.5/2) rectangle (16.5/2,12.5/2);   
    \path [fill = white!90!black] (0.5/2,14.5/2) rectangle (16.5/2,18.5/2);   
    \path [fill = white!90!black] (0.5/2,20.5/2) rectangle (16.5/2,24.5/2);   
    \foreach \y/\lab/\w in {
      24/ /0.5,
      23/ /0.5,
      22/$a_{1}$/1.5,
      21/$a_{3}$/1.5,
      20/ /0.5,
      19/ /0.5,
      18/ /0.5,
      17/ /0.5,
      16/$a_{4}$/1.5,
      15/$a_{2}$/1.5,
      14/ /0.5,
      13/ /0.5,
      12/ /0.5,
      11/ /0.5,
      10/$a_{1}'$/1.5,
      9/$a_{3}'$/1.5,
      8/ /0.5,
      7/ /0.5,
      6/ /0.5,
      5/ /0.5,
      4/$a_{4}'$/1.5,
      3/$a_{2}'$/1.5,
      2/ /0.5,
      1/ /0.5}
    {
      \draw [line width = \w pt] (0,\y/2 )--(8.5,\y/2 );
      \node [left]at (0,\y/2 ) {\lab };
    }
    \foreach \x in {1,...,16} {
      \pgfmathparse{Mod(\x,4)==2?1:0}
      \ifnum \pgfmathresult>0 
      \draw [line width = 1.5pt] (\x/2 ,0)--(\x/2,12.5);      
      \else 
      \draw (\x/2 ,0)--(\x/2,12.5);
      \fi 
    }
    \foreach \x/ \y in {3/15,3/21,7/3,7/9,11/11,11/23,15/5,15/17}
    \draw [fill] (\x/2 ,\y/2 ) circle (4pt);
    \draw [decoration = {brace,raise = 4pt},decorate] (0.5,12.5)--(2,12.5) node [above,midway,yshift = 8pt] {$s_{x}$};
    \draw [decoration = {brace,raise = 4pt},decorate] (2.5,12.5)--(4,12.5) node [above,midway,yshift = 8pt] {$s_{x}'$};
    \draw [decoration = {brace,raise = 4pt},decorate] (4.5,12.5)--(6,12.5) node [above,midway,yshift = 8pt] {$c_{x}^{1}$};
    \draw [decoration = {brace,raise = 4pt},decorate] (6.5,12.5)--(8,12.5) node [above,midway,yshift = 8pt] {$c_{x}^{2}$};
    \foreach \y/ \lab in {11/$c_{x}^{1}$,5/$c_{x}^{1}$,8/$c_{x}^{2}$,2/$c_{x}^{2}$,
      9.5/a 3-clause with $\neg x$,
      6.5/a 3-clause with $x$,
      3.5/a 3-clause with $x$,
      0.5/a 3-clause with $\neg x$
    }
    \draw [decoration = {brace,raise = 12pt},decorate] (0,\y )--(0,\y+1 ) node [left,midway,xshift = -22pt] {\lab };
    \foreach \x/ \y in {3/16,2/21,6/3,7/10,10/11,11/22,15/4,14/17}
    \draw [fill = red] (\x/2 ,\y/2 ) circle (4pt);
  \end{tikzpicture}
  \hfil 
  \begin{tikzpicture}[scale = .6,every node/.style = {scale = .6}]
    \path [fill = white!90!black] (0.5/2,2.5/2) rectangle (16.5/2,6.5/2);   
    \path [fill = white!90!black] (0.5/2,8.5/2) rectangle (16.5/2,12.5/2);   
    \path [fill = white!90!black] (0.5/2,14.5/2) rectangle (16.5/2,18.5/2);   
    \path [fill = white!90!black] (0.5/2,20.5/2) rectangle (16.5/2,24.5/2);   
    \foreach \y/\lab/\w in {
      24/ /0.5,
      23/ /0.5,
      22/$a_{1}$/1.5,
      21/$a_{3}$/1.5,
      20/ /0.5,
      19/ /0.5,
      18/ /0.5,
      17/ /0.5,
      16/$a_{4}$/1.5,
      15/$a_{2}$/1.5,
      14/ /0.5,
      13/ /0.5,
      12/ /0.5,
      11/ /0.5,
      10/$a_{1}'$/1.5,
      9/$a_{3}'$/1.5,
      8/ /0.5,
      7/ /0.5,
      6/ /0.5,
      5/ /0.5,
      4/$a_{4}'$/1.5,
      3/$a_{2}'$/1.5,
      2/ /0.5,
      1/ /0.5}
    {
      \draw [line width = \w pt] (0,\y/2 )--(8.5,\y/2 );
      \node [left]at (0,\y/2 ) {\lab };
    }
    \foreach \x in {1,...,16} {
      \pgfmathparse{Mod(\x,4)==2?1:0}
      \ifnum \pgfmathresult>0 
      \draw [line width = 1.5pt] (\x/2 ,0)--(\x/2,12.5);      
      \else 
      \draw (\x/2 ,0)--(\x/2,12.5);
      \fi 
    }
    \foreach \x/ \y in {3/15,3/21,7/3,7/9,11/11,11/23,15/5,15/17}
    \draw [fill] (\x/2 ,\y/2 ) circle (4pt);
    \draw [decoration = {brace,raise = 4pt},decorate] (0.5,12.5)--(2,12.5) node [above,midway,yshift = 8pt] {$s_{x}$};
    \draw [decoration = {brace,raise = 4pt},decorate] (2.5,12.5)--(4,12.5) node [above,midway,yshift = 8pt] {$s_{x}'$};
    \draw [decoration = {brace,raise = 4pt},decorate] (4.5,12.5)--(6,12.5) node [above,midway,yshift = 8pt] {$c_{x}^{1}$};
    \draw [decoration = {brace,raise = 4pt},decorate] (6.5,12.5)--(8,12.5) node [above,midway,yshift = 8pt] {$c_{x}^{2}$};
    \foreach \y/ \lab in {11/$c_{x}^{1}$,5/$c_{x}^{1}$,8/$c_{x}^{2}$,2/$c_{x}^{2}$,
      9.5/a 3-clause with $\neg x$,
      6.5/a 3-clause with $x$,
      3.5/a 3-clause with $x$,
      0.5/a 3-clause with $\neg x$
    }
    \draw [decoration = {brace,raise = 12pt},decorate] (0,\y )--(0,\y+1 ) node [left,midway,xshift = -22pt] {\lab };
    \foreach \x/ \y in {2/15,3/22,7/4,6/9,11/10,10/23,14/5,15/16}
    \draw [fill = red] (\x/2 ,\y/2 ) circle (4pt);
  \end{tikzpicture}
  \caption{A part of configuration $T$ depicted by red dots: on the left the case when $\alpha (x) = \F $ and on the right the case when $\alpha (x) = \T $.}
  \label{fig:variablesol}
\end{figure}
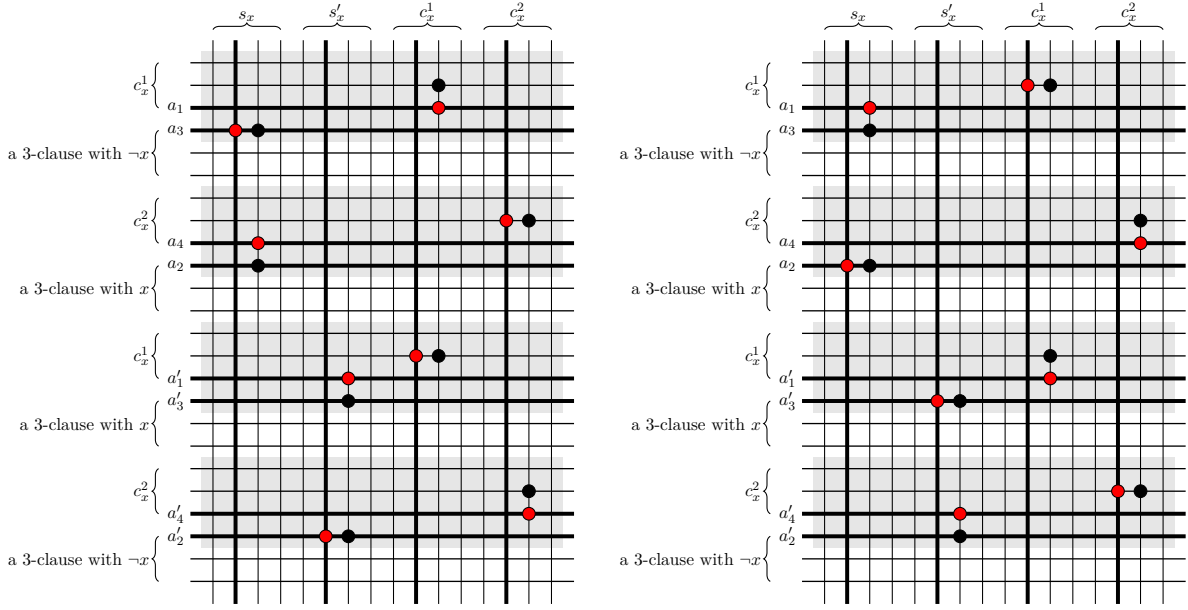

   Next, assume that there exists a satisfiable assignment $\alpha $ for $S$. We will construct a $(V,H)$-blocking configuration $T$ at maximum distance $1$ from $I$ by specifying new positions of sensors for each gadget.  As it turns out, $T$ will also be an integer blocking configuration.  For the variable gadget of a variable $x_{i}$, we distinguish two cases depending on value of $\alpha (x_{i})$ as depicted in Figure~\ref{fig:variablesol}. As one can observe, all $V$- and $H$-lines of the gadget are blocked with exception of two $H$-lines $a_{2},a_{3}'$ that correspond to the positive occurrences of $x_{i}$ in case $\alpha (x_{i}) = \F $ (configuration depicted on the left) or two $H$-lines $a_{2}',a_{3}$ that correspond to the negated occurrences of $x_{i}$ in case $\alpha (x_{i}) = \T $ (configuration depicted on the right). Note that $H$-lines corresponding to occurrences of $x_{i}$ with value \T{} are blocked by the variable gadget.

For each clause, pick a literal $\ell $ with $\alpha (\ell ) = \T $. To obtain $T$ move the sensor corresponding to this literal left and move the remaining two sensors up. The first sensor will block the $V$-line of the gadget and the remaining two sensors block the two corresponding $H$-lines. The $H$-line corresponding to $\ell $ is not blocked by the clause gadget, but since $\alpha (\ell ) = \T $, it is blocked by the variable gadget corresponding to $\ell $. Hence, all $H$-lines corresponding to literals are blocked. Therefore, $T$ is $(V,H)$-blocking and clearly at maximum distance at most $1$ from $I$. This shows that given $I$, $V$ and $H$, it is NP-hard to decide if there exists an integer $(V,H)$-blocking configuration $T = (\Rab,q_{1},\dots,q_{k})$ at maximum distance at most $1$ from $I$.  
\end{proof}

We now present the main result of this section:

\begin{theorem}\label{thm:Iminmax}
  The MinMax-WCR Problem with Manhattan or Euclidean metric is
  NP-complete for maximum distance $D=1$.
\end{theorem}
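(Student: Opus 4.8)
The plan is to prove NP-completeness by reducing from the MinMax $(V,H)$-Blocking Problem, which Lemma~\ref{lem:IMinMaxVH1} already shows is NP-complete for $D=1$ under conditions~\ref{item:VH1} and~\ref{item:VH2}. Membership in NP is immediate: a certificate is the final position of every sensor, and one checks in polynomial time that every line is blocked and that no sensor moved by more than $D=1$. Since MinMax-WCR is precisely the $(V,H)$-Blocking Problem in which \emph{all} lines are required (that is, $V=\{1,\dots,a\}$ and $H=\{1,\dots,b\}$), the entire difficulty is to turn a general instance, where only the lines of $V$ and $H$ must be blocked, into an equivalent instance in which every line must be blocked, without changing whether a blocking configuration at distance $\le 1$ exists. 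As observed just after Lemma~\ref{lem:IMinMaxVH1}, for integer configurations and $D=1$ the feasible moves of a sensor coincide under the Manhattan and Euclidean metrics, so it suffices to argue the reduction for integer moves and both metrics follow simultaneously.

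For the construction, I would start from an instance $I=(\Rab,p_{1},\dots,p_{k})$ with sets $V,H$ produced by Lemma~\ref{lem:IMinMaxVH1} and \emph{pad} its non-required lines. For each non-required vertical line $i\in\{1,\dots,a\}\setminus V$ I append a new, private horizontal line above the rectangle and place a single \emph{padding sensor} on column $i$ at that new row; symmetrically, for each non-required horizontal line $j\in\{1,\dots,b\}\setminus H$ I append a private vertical line to the right of the rectangle and place a padding sensor on row $j$ in that new column. Choosing the appended rows (resp.\ columns) to be consecutive guarantees that the new lines are themselves blocked by exactly these padding sensors, and conditions~\ref{item:VH1}–\ref{item:VH2} keep the top and right boundary of $\Rab$ clean, so the padding attaches without colliding with any original sensor. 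The enlarged instance is then read as a MinMax-WCR instance, i.e.\ every vertical and every horizontal line is required. The perpendicular coordinate of each padding sensor is essentially fixed, since it is the unique sensor able to reach its private new line and must keep that coordinate in order to block it.

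Correctness splits in the usual way. For the forward direction, given a $(V,H)$-blocking configuration at distance $\le 1$, I keep the original sensors exactly where that solution places them and leave every padding sensor on its starting cell; the original sensors block $V\cup H$, the padding sensors block all appended lines together with their designated non-required lines, and no move exceeds $1$, so the enlarged instance is WCR-blocked. For the backward direction I would take any WCR solution at max-distance $\le 1$ and \emph{normalize} it, in the spirit of Lemma~\ref{lbl:fraction-to-integer}, through a sequence of local exchanges that neither increase any move beyond $1$ nor destroy the blocking property, until reaching a canonical solution in which each padding sensor blocks precisely its intended non-required line and no padding sensor blocks any line of $V$ or $H$. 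In this canonical solution the lines of $V$ and $H$ are blocked by the original sensors alone, so restricting to those sensors yields a $(V,H)$-blocking configuration at distance $\le 1$. Together with the forward direction this gives the equivalence and hence NP-hardness, and with NP membership the theorem follows for both metrics.

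The main obstacle is exactly this non-interference: proving that the padding can never help block a required line. Because WCR forces \emph{every} line to be covered, including the two lines flanking each $V$- or $H$-line, a padding sensor cannot be perfectly immobilized; it retains a one-step slack in the parallel direction, and the real danger is a padding sensor sliding onto a $V$-line and thereby \emph{short-circuiting} the forced leftward move that encodes clause satisfaction in the gadgets of Lemma~\ref{lem:IMinMaxVH1}. I would defeat this in two ways. First, I would exploit the gadget geometry: there every original sensor sits on the third column of a four-column block of the form $[\,\text{empty},\,V,\,\text{sensor},\,\text{empty}\,]$, so no original sensor lies within horizontal distance $1$ of the empty column immediately to the left of a $V$-line; hence a $V$-line can never be blocked from its left by an original sensor, and the clause logic stays intact for them. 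Second, for the padding sensors I would show, by a cascade/counting argument, that any padding sensor occupying a $V$- or $H$-line forces a chain of compensating displacements of neighbouring padding sensors that ultimately leaves some required line uncovered unless an original gadget sensor is pulled off its own $V$-line --- so a ``cheat'' at one clause is always paid for by an unsatisfied clause elsewhere, and no net advantage is possible. Formalizing this conservation as the exchange step of the normalization above is the technical heart of the proof; everything else is polynomial-time bookkeeping.
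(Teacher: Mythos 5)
Your high-level plan --- NP membership plus a padding reduction from the MinMax $(V,H)$-Blocking Problem of Lemma~\ref{lem:IMinMaxVH1} --- is the same as the paper's, and you correctly identify the crux (the one-step slack of a padding sensor in the direction parallel to its private line). But your construction does not eliminate that slack, and your proposed repair cannot work. Concretely: take any 3-clause gadget from Lemma~\ref{lem:IMinMaxVH1}, occupying columns $i,i+1,i+2,i+3$ with $i+1\in V$ and all three clause sensors on column $i+2$. Since $i+2\notin V$, your scheme gives it a padding sensor $u$ at $(i+2,\rho)$ for a private appended row $\rho$. In a WCR solution, $u$ simply slides left to $(i+1,\rho)$: it still blocks its private row $\rho$, it now blocks the $V$-line $i+1$ (a sensor blocks a vertical line from any row), and column $i+2$ remains blocked by the clause sensors --- which are thereby relieved of the forced leftward move encoding ``some literal of this clause is true,'' and can cover their $H$-lines by unit vertical moves (indeed, for lines of type $a_2,a_3$ the switch sensors already sit on them). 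Every row or column vacated in this process is a non-required line of the original instance and is covered by its own padding sensor, which never moves. Applying this cheat in every gadget yields a valid distance-$1$ WCR solution for the padded instance regardless of the satisfiability of the 3-SAT(2,2) instance, so the backward direction of your reduction is false. No normalization in the spirit of Lemma~\ref{lbl:fraction-to-integer} can rescue it: the cheating configuration is a genuinely feasible solution from which no $(V,H)$-blocking configuration can be extracted, and your claimed ``cascade/counting'' argument never gets started because the sideways slide is locally self-repairing --- it triggers no chain of compensating displacements.

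What is missing is a mechanism that \emph{exhausts the movement budget} of every padding sensor, and this is exactly what the paper's construction supplies. It places the padding sensors \emph{off} their target rows, in a diagonal staircase on consecutive private rows above (and, symmetrically, columns left of) the shifted rectangle: one new horizontal line is within distance $1$ of only $v_1$, forcing $v_1$ to move down by exactly $1$; the line $v_1$ vacates is then reachable only by $v_2$, and so on down the chain. Each forced unit vertical move consumes the entire budget $D=1$ (under either metric), pinning each padding sensor to its designated non-$V$ column, so no padding sensor can ever block a line of $V$ or $H$; the staircase boundary is sealed with a duplicated sensor and triples of corner sensors, which is where properties \ref{item:VH1} and \ref{item:VH2} are used. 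Without this forcing idea --- padding sensors that \emph{must} move onto lines only they can reach, rather than sensors parked on their targets --- your reduction cannot be repaired.
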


\begin{proof}
We will show that the MinMax Problem is NP-complete for $D = 1$ by a reduction from the MinMax $(V,H)$-Blocking Problem 
which is NP-complete by Lemma~\ref{lem:IMinMaxVH1}. 

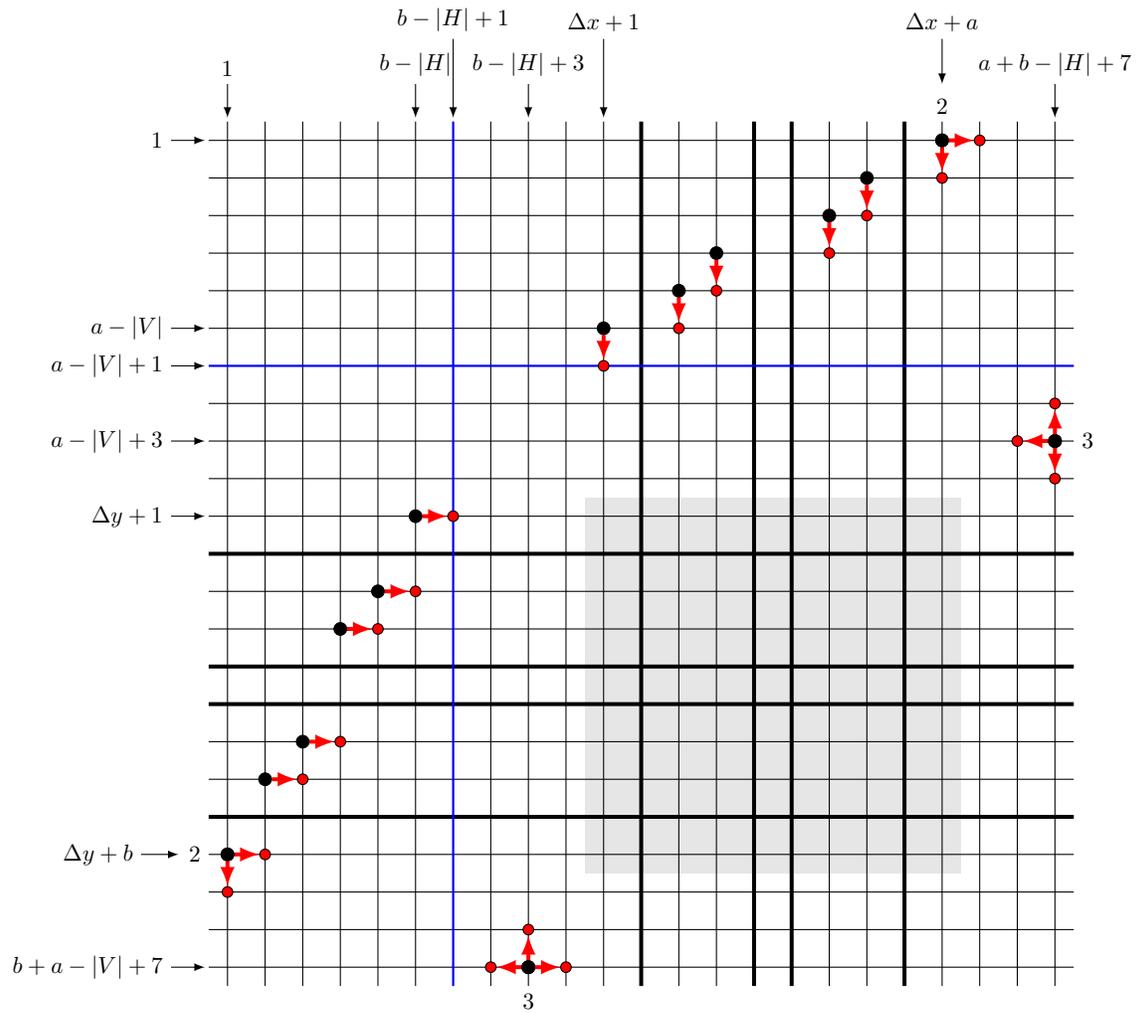
\begin{figure}
  \centering
  
\begin{tikzpicture}[scale = .5,every node/.style = {scale = .8}]
    \path [fill = white!90!black] (9.5,25-12.5) rectangle (19.5,25-22.5);    
    \foreach \x in {0,...,22}
    \draw (\x,-0.5)--(\x,22.5);
    \foreach \y in {0,...,22}
    \draw (-0.5,\y )--(22.5,\y );
    \foreach \x in {11,14,15,18}
    \draw [line width = 1.5pt] (\x,-0.5)--(\x,22.5);
    \foreach \y in {11,14,15,18}
    \draw [line width = 1.5pt] (-0.5,22-\y )--(22.5,22-\y );
    \draw  [line width = 0.8pt,blue] (-0.5,16)--(22.5,16);
    \draw  [line width = 0.8pt,blue] (6,-0.5)--(6,22.5);
    \foreach \x/ \y/ \dx/ \dy in { 
    0/24/1/0,
    0/24/0/1,
    1/22/1/0,
    2/21/1/0,
    3/18/1/0,
    4/17/1/0,
    5/15/1/0,
      10/10/0/1,
      12/9/0/1,
      13/8/0/1,
      16/7/0/1,
      17/6/0/1,
      19/5/0/1,
      19/5/1/0,
      8/27/-1/0,
      8/27/0/-1,
      8/27/1/0,
      22/13/0/1,
      22/13/-1/0,
      22/13/0/-1
} {
      \draw  (\x, 27-\y) node[draw,circle,minimum size = 6pt,fill = black,inner sep = 0pt] (black) {};
      \draw  (\x+\dx , 27-\y-\dy ) node[draw,circle,minimum size = 5pt,fill = red,inner sep = 0pt] (red) {};
      \draw [line width = 1.5pt,red,->] (black) -- (red);
    }
    \node [left] at (-0.5,27-24) {$2$};
    \node [above] at (19,22+0.5) {$2$};
    \node [below] at (8,22-22.5) {$3$};
    \node [right] at (22.5,22-8) {$3$};
    \foreach \x/\lab/\yb /\yt in {
       0/$1$/0.1/1,
       5/$b - |H|$/0.1/1,
       6/$b - |H| + 1$/0.1/2.2,
       8/$b - |H| + 3$/0.1/1,
       10/$\Delta x + 1$/0.1/2.2,
       19/$\Delta x + a$/1/2.2,
       22/$a + b - |H| + 7$/0.1/1}
    {
      \draw [->] (\x,22.5+\yt )--(\x,22.5+\yb );
      \node [above] at (\x ,22.5+\yt ) {\lab };
    }
    \foreach \x/\lab/\yb /\yt in {
    0/$1$/0.1/1,
    5/$a - |V|$/0.1/1,
    6/$a - |V| + 1$/0.1/1,
    8/$a - |V| + 3$/0.1/1,
    10/$\Delta y + 1$/0.1/1,
    19/$\Delta y + b$/0.8/1.8,
    22/$b + a - |V| + 7$/0.1/1}
    {
      \draw [->] (-0.5-\yt, 22-\x )--(-0.5-\yb, 22-\x );
      \node [left] at (-0.5-\yt, 22-\x ) {\lab };
    }
    \end{tikzpicture}

  \caption{The reduction from $(V,H)$-blocking problem Movement MinMax. A dot with number $c$ depicts $c$ sensors at the same position. The shaded square represents the original instance $I$. As before the bold lines depict lines in $V$ and $H$. The construction guarantees that the sensors around the shaded square have to block all lines not in $V$ and $H$, and cannot block any lines in $V$ and $H$, cf. the proof. Blue lines indicate the starting point of this proof.}
  \label{fig:blocking}
\end{figure}

Let $I = (\Rab ,p_{1},\dots, p_{k}),H,V$ be an instance of the MinMax $(V,H)$-Blocking Problem satisfying properties \ref{item:VH1} and \ref{item:VH2}.
Let $R': = R_{(a + b - |H| + 7)\times (b + a - |V| + 7)}$.
We construct a new instance $$I' = (R',p_{1}',\dots, p_{k}',v_{1},\dots, v_{a - |V| + 4},h_{1},\dots, h_{b - |H| + 4})$$ of the MinMax Problem as follows. We add $\Delta x: = b - |H| + 4$ vertical lines before rectangle $\Rab $ and $3$ vertical lines after $\Rab $. Similarly, we add $\Delta y := a - |V| + 4$ horizontal lines above $\Rab $ and $3$ horizontal lines below. This defines the rectangle $R'$ of $I'$. Consequently, every sensor $p_{i}$ of the original instance $I$ is shifted by $(\Delta x,\Delta y)$, i.e., $p_{i}' = (x(p_{i}) + \Delta x,y(p_{i}) + \Delta y)$. Next, we define sensors $v_{1},\dots, v_{a - |V| + 4}$. Let $x = c$ be the $i$-th vertical line of $\Rab $ not in $V$. We set $v_{i} = (c + \Delta x,a - |V| + 1 - i)$. This defines the first $a - |V|$ sensors of the sequence. Next, add another sensor $v_{a - |V| + 1}$ at the same
position as $v_{a - |V|}$. Finally, set the remaining $3$ sensors of the sequence to $(a + b - |H| + 7,a - |V| + 3)$. The sensors $h_{1},\dots, h_{b - |H| + 3D + 1}$ can be set symmetrically, cf. Figure~\ref{fig:blocking}.

Next we will show that, under both the Manhattan or Euclidean metrics, there exists a configuration $T$ at maximum distance at most $1$ from $I$ that is $(V,H)$-blocking if and only if there exits a configuration $T'$ at maximum distance at most $1$ from $I'$.

  Assume such a $T = (\Rab ,q_{1},\dots, q_{k})$ exists. We construct $$T' = (R',q_{1}',\dots, q_{k}',v_{1}',\dots, v_{a - |V| + 4}',h_{1}',\dots, h_{b - |H| + 4}')$$ as follows:
  \begin{itemize}
  \item to obtain $q_{i}'$ move each sensor $p_{i}'$ in the same direction $p_{i}$
    moved in $T$;
  \item for $i\in \{1,\dots, a - |V|\} $ to obtain $v_{i}'$ move sensor $v_{i}$ $1$ line down;
  \item move sensor $v_{a - |V| + 1}$ $1$ line right, sensor $v_{a - |V| + 2}$ $1$ line left,
    sensor $v_{a - |V| + 3}$ $1$ line up and sensor $v_{a - |V| + 4}$ $1$ line down;
  \item to obtain sensors $h_{1}',\dots, h_{b - |H| + 4}'$ move sensors $h_{1},\dots, h_{b - |H| + 4}$ symmetrically to corresponding cases above;
    cf. Figure~\ref{fig:blocking}.
  \end{itemize}
  Clearly, since each sensor in $T$ moved by Manhattan/Euclidean distance at most $1$, each sensor in $T'$ has moved Manhattan/Euclidean distance at most $1$. Therefore, $T'$ is maximum distance at most $1$ from $I'$. 
  Next, we verify that each vertical line of $R'$ is blocked. The vertical lines $1,\dots, b - |H| + 1$ are blocked by sensors $h_{b - |H + 1}',\dots, h_{1}'$. The next $3$ vertical lines are blocked by sensors $h_{b - |H| + 2}',\dots, h_{b - |H| + 4}$. The next $a$ vertical lines (the lines of the original rectangle $\Rab $) are blocked by either $q_{1}',\dots, q_{k}'$ or $v_{1}',\dots, v_{a - |V|}'$ depending on whether the line is in $V$ or not. The remaining $3$ vertical lines are blocked by sensors $v_{a - |V| + 1}',\dots,v_{a - |V| + 4}'$. Similarly, one can argue that every horizontal line is blocked. Hence, $T'$ is blocking, as required.

For the converse, suppose such a $T' = (R',q_{1}',\dots, q_{k}',v_{1}',\dots, v_{a - |V| + 4}',h_{1}',\dots, h_{b - |H| + 4}')$ exists. 
We will first show that sensors $v_{1}',\dots, v_{a - |V| + 4}',h_{1}',\dots, h_{b - |H| + 4}'$ do not block lines in $V$ and $H$.   In other words, the moves illustrated in Figure~\ref{fig:blocking} are forced.  To this end, consider horizontal line $a - |V| + 1$ (see the blue horizontal line in Figure~\ref{fig:blocking}). The only sensor of $I'$ at (Manhattan and Euclidean) distance at most $1$ from this line is $v_{1}$ (note that there are $3$ horizontal lines separating this line from the $h$-sensors and $p'$-sensors). Therefore, in $T'$, $v_{1}'$ has moved $1$ line down. Now, consider horizontal line $a - |V|$. The only sensors at distance at most $1$ are $v_{1}$ and $v_{2}$. Since $v_{1}'$ does not lie on this line, $v_{2}'$ must lie on it, i.e., it has moved $1$ line down. Inductively, it follows that all sensors $v_{1}',\dots, v_{a - |V| - 1}'$ have moved $1$ line down. Now, consider the vertical line $\Delta x + a + 1$. Since the instance $I$ satisfies \ref{item:VH2}, the sensors $v_{a - |V|},v_{a - |V| + 1}$ (lying at the same position) lie on the last vertical line of rectangle $\Rab $. Since $I$ satisfies \ref{item:VH1}, the vertical line $\Delta x + a + 1$ can only be blocked by either $v_{a - |V|}'$ or $v_{a - |V| + D}'$.  This sensor must move at least one line right, therefore, it cannot block the horizontal line $2$. It follows the remaining sensor of $\{v_{a - |V|}', v_{a - |V| + 1}'\}$ must move $1$ line down to block this line. 
We have that none of these sensors in $T'$ considered so far can block lines in $V$. The remaining $v'$-sensors and $h'$-sensors are at distance more than $1$ from $V$-lines, therefore the $V$-lines must be blocked by $q'$-sensors. Symmetrical argument shows that also the $H$-lines must be blocked by $q'$-sensors.

Now, we are ready to construct $T = (\Rab ,q_{1},\dots, q_{k})$. For each $i\in \{1,\dots ,k\} $, obtain $q_{i}$ from $q_{i}'$ by adding vector $(-\Delta x,-\Delta y)$
i.e. set $x(q_{i}) := x(q_{i}') - \Delta x$
and $y(q_{i}) := y(q_{i}') - \Delta y$. 
Since 
sensors $p_{i}'$ and $q_{i}'$ have distance at most $1$,
$T$ is at maximum distance at most $1$ from $I$. 
It is also easy to see that $T$ is $(V, H)$-blocking, since no sensor outside of $q_1', \ldots, q_k'$ was used to block any line of the MinMax instance corresponding to $V$ and $H$ after shifting (even partially).
Thus $q_1', \ldots, q_k'$ block the shifted $V$-lines and $H$-lines of the instance $I$, 
implying that $q_1, \ldots, q_k$ must be $(V, H)$-blocking.
This completes the 
\end{proof}

\section{Conclusion}
In this paper we studied the complexity of establishing weak barrier coverage (WCR) in a given rectangular area using mobile sensors so that the network can detect any crossing of the area in a direction perpendicular to the sides of the rectangle. We considered the three
typical optimization measures MinSum, MinMax, and MinNum for movements of sensors. 
For the MinNum-WCR problem, we show that the problem is NP-hard if sensors have sensing diameter 2, even if sensors are placed initially at integer locations. On the other hand, if sensors of sensing diameter 1 are placed at integer locations, we show an $O(n^{3/2})$ algorithm to solve the problem. For the MinMax-WCR problem, we show that  the problem is NP-complete for both Euclidean and Manhattan metrics even when sensors are initially placed in integer locations. For the MinSum-WCR problem using the Manhattan metric, we showed that the problem is NP-complete for heterogeneous sensors, and solvable in $O(n \log n)$ time for homogeneous sensors. The complexity of the MinSum-WCR problem for the Euclidean metric remains unknown.

\bibliographystyle{plain}
\bibliography{refs1}
\newpage

\end{document}